\newtheorem{prop}{Proposition}
\def \beqi{\begin{IEEEeqnarray}{rcl}\IEEEyesnumber}
\def \eeqi{\end{IEEEeqnarray}}
\def \bmat{\begin{bmatrix}}
\def \emat{\end{bmatrix}}
\begin{document}

% \title{Joint Optimization for User Association, Resource Allocation and Beamforming in SAGIN: A Symbiotic Communication Perspective}

\title{Towards Symbiotic SAGIN Through Inter-operator Resource and Service Sharing: Joint Orchestration of User Association and Radio Resources}
% : User Association, Resource Allocation and Beamforming}

\author{Shizhao He, Jungang Ge, Ying-Chang Liang, {\it Fellow, IEEE}, and Dusit Niyato, {\it Fellow, IEEE}\vspace{-2em}
% $^\dag$University of Electronic Science and Technology of China (UESTC), Chengdu, P. R. China\\
% $^\ddag$Nanyang Technological University (NTU), Singapore\\
% Email: heshizhao@std.uestc.edu.cn, gejungang@std.uestc.edu.cn, and liangyc@ieee.org

\thanks{This work has been submitted to the IEEE for possible publication. Copyright may be transferred without notice, after which  this version may no longer be accessible.}
\thanks{S. He, J. Ge, and Y.-C. Liang are with the National Key Laboratory of Wireless Communications, and the Center for Intelligent Networking and Communications (CINC), University of Electronic Science and Technology of China (UESTC), Chengdu 611731, China (e-mail: {heshizhao@std.uestc.edu.cn; gejungang@std.uestc.edu.cn; liangyc@ieee.org}).
}
\thanks{Dusit Niyato is with the School of Computer Science and Engineering, Nanyang Technological University (NTU), Singapore (e-mail:dniyato@ntu.edu.sg)}}

\maketitle
\IEEEpubidadjcol

\begin{abstract}

The space-air-ground integrated network (SAGIN) is a pivotal architecture to support ubiquitous connectivity in the upcoming $6$G era.
Inter-operator resource and service sharing is a promising way to realize such a huge network, utilizing resources efficiently and reducing construction costs.
Given the rationality of operators, the configuration of resources and services in SAGIN should focus on both the overall system performance and individual benefits of operators.
Motivated by emerging symbiotic communication facilitating mutual benefits across different radio systems, we investigate the resource and service sharing in SAGIN from a symbiotic communication perspective in this paper.
In particular, we consider a SAGIN consisting of a ground network operator (GNO) and a satellite network operator (SNO). Specifically, we aim to maximize the weighted sum rate (WSR) of the whole SAGIN by jointly optimizing the user association, resource allocation, and beamforming. Besides, we introduce a sharing coefficient to characterize the revenue of operators. Operators may suffer revenue loss when only focusing on maximizing the WSR. 
In pursuit of mutual benefits, we propose a mutual benefit constraint (MBC) to ensure that each operator obtains revenue gains.
Then, we develop a centralized algorithm based on the successive convex approximation (SCA) method. Considering that the centralized algorithm is difficult to implement, we propose a distributed algorithm based on Lagrangian dual decomposition and the consensus alternating direction method of multipliers (ADMM). Finally, we provide extensive numerical simulations to demonstrate the effectiveness of the two proposed algorithms, and the distributed optimization algorithm can approach the performance of the centralized one. The results also reveal that the proposed MBCs can enable operators to achieve mutual benefits and realize a symbiotic resource and service sharing paradigm.
% To prompt network operators to participate in constructing SAGIN,  guaranteeing each operator's revenue, namely, achieving mutual benefits among operators is important.
% Noting that the recently proposed symbiotic communication can achieve mutual benefits among different radio systems, in this paper, we investigate the resource and service sharing in SAGIN from a symbiotic communication perspective.
% Particularly, we aim to maximize the weighted sum rate (WSR) of the whole SAGIN through jointly optimizing the user association, resource allocation, and beamforming design. Besides, we propose a mutual benefit constraint (MBC) to guarantee the revenue of each operator, and we introduce a sharing coefficient to characterize the revenue of operators.
% Then, we develop a centralized algorithm based on
% the successive convex approximation (SCA) method for the WSR maximization problem. Since the centralized algorithm is difficult to implement in practice, a distributed algorithm based on Lagrangian dual decomposition and the consensus alternating direction method of multipliers (ADMM) is also proposed for the optimization problem.
% Simulation results demonstrate the effectiveness of proposed algorithms, and the distributed algorithm can achieve a close performance to the centralized algorithm.
% Besides, by comparing the revenue of each operator in different cases, it can be observed that the proposed MBCs enable the operators to achieve mutual benefits and realize symbiotic communications in the SAGIN.

\begin{IEEEkeywords}
  Symbiotic communication, SAGIN, inter-operator resource and service sharing, resource optimization.
  \end{IEEEkeywords}

\end{abstract}

\vspace{-1em}
%\begin{IEEEkeywords}
%Spectrum sharing, Service sharing, Symbiotic communication, User association.
%\end{IEEEkeywords}

\section{Introduction} \label{sec:intro}
Wireless communication has made outstanding achievements in the past decades and stepped into the $5$G era.
Although $5$G can realize much better performance than former communication systems, including higher peak data rate, and ultra reliable and low latency, it still cannot satisfy some requirements in $6$G, such as the ubiquitous connectivity \cite{imt2030}.
In fact, there are still $2.7$ billion people who cannot access the Internet, and about $6\%$ of the rural area worldwide lacks mobile network coverage \cite{itu2022}. The principal reasons are the limited coverage of ground networks and the high expenses of deploying networks in rural areas.
Satellite networks have emerged as an important complementary technology to enhance coverage of ground networks \cite{liu2018space}, which can tackle the above issues. Compared with ground networks, satellite networks can provide seamless coverage to users without deploying costly fiber optic backhaul. 
% Fortunately, the rapid development of satellite technology boosts the development of satellite communication, which can effectively provide communication service to the remote areas with low cost, such as the Starlink.
Therefore, the space-air-ground integrated network (SAGIN) is a vital technology to realize ubiquitous connectivity in the $6$G era.

Recently, satellite networks have rapidly developed and attracted considerable attention from academia to industry. For instance, SpaceX has launched over $4,000$ low earth orbit (LEO) satellites to construct Starlink \cite{pachler2021updated}. 
To realize ubiquitous connectivity, addressing the severe path loss caused by the long propagation distance is necessary.
In Starlink, a satellite terminal (ST) comprising a satellite dish and an access point is exploited to deal with this problem. The satellite dish provides high antenna gain to compensate for the path loss, and the access point offers access service to users. In \cite{di2019ultra}, a satellite access network for $5$G and beyond is proposed based on this kind of ST.
Satellite networks are also expected to realize high throughput satellite communication. To this end, high frequency is adopted in satellite communication, such as Ka band \cite{perez2019signal}. Besides, multi-beam satellites also play an essential role in realizing high throughput satellite communication. Compared with conventional satellites, multi-beam satellites can generate multiple beams with high gain and thus provide better satellite communication \cite{joroughi2016generalized,wang2021resource}. 
Moreover, the shadowed-Rician (SR) fading is widely adopted to model the satellite-ground channels, based on which many works analyze the performance of satellite networks \cite{abdi2003new,na2021performance,kim2023downlink}.
Resource allocation in satellite networks is also widely studied \cite{gu2021dynamic,hu2018deep}. Additionally, diverse multiple access techniques are adopted in satellite networks to serve users \cite{wang2020noma,li2022user}, such as the time division multiple access (TDMA) manner.

The rapid growth of satellite networks paves the way for realizing SAGIN, which has been widely studied in recent years. Since SAGIN is a large-scale network, it is almost impossible for an individual network operator to construct it. 
% Besides, considering that different networks in SAGIN usually have complementary features, they can achieve better performance via collaboration.
Therefore, inter-operator sharing\footnote{For ease of notation, ``inter-operator sharing'' is synonymous with ``inter-operator resource and service sharing'' in this paper.} is a promising technology for constructing SAGIN.
% Since it is almost impossible for an individual network operator to construct such a large-scale network, many satellite network operators (SNOs) and ground network operators (GNOs) are trying to collaborate. 
In fact, the famous ground network operator (GNO) T-mobile announces that it aims to provide seamless coverage to users by cooperating with satellite network operator (SNO) SpaceX \cite{tmobile2022}. 
% Besides, considering that different networks in SAGIN usually have complementary features, they can achieve better performance via inter-operator sharing\footnote{For ease of notation, ``inter-operator sharing'' is synonymous with ``inter-operator resource and service sharing'' in this paper.}. 
Optimizing the configuration of resources and services is an essential problem to fully exploit the advantages of SAGIN with inter-operator sharing.
% Since an SAGIN incorporates diversity resources and services from different networks,
Considering the limited spectrum, many works study the spectrum sharing problem in SAGIN \cite{wang2021joint,liang2021realizing,lyu2021service}.
% In \cite{rahman2021game}, the authors exploit the stochastic geometry to derive the rate expression for ground networks when satellite and ground networks share the $28$ GHz band. In \cite{lyu2021service}, the spectrum belongs to different networks in SAGIN is uniformly sliced for isolated vehicle services to avoid interference. To maximize the system revenue incorporating the throughput and UAV dispatching cost, both spectrum slicing and UAV dispatching are optimized. Besides, the spectrum sharing in the downlink of SAGIN is studied in \cite{wang2021joint}. The authors aim to maximize the sum rate of SAGIN while guaranteeing the quality-of-service of ground users by optimizing beamforming of satellites and BSs.
In \cite{liang2021realizing}, catering to the heterogeneity and high dynamics of SAGIN, the authors propose an intelligent spectrum management framework empowered by artificial intelligence and software defined network (SDN). 
In \cite{zhang2022resource}, the spectrum allocation is optimized to maximize the sum rate while reducing the interference received at satellite users.
% the uplink of satellite networks and that of ground networks operate in the same spectrum band. To maximize the throughput of ground networks and minimize the interference received by satellites, the channel assignment and power allocation are optimized.
% Apart from spectrum sharing, service sharing in SAGIN is also essential to fully exploit the complementary advantages of different networks.
On the other hand, service sharing is another critical topic in studying SAGIN. Many recent scientific literatures demonstrate that satellite networks are a promising supplement to ground networks in providing diversity services to remote areas, such as backhaul connectivity \cite{li2020maritime,alsharoa2020improvement,liu2022joint} and task offloading \cite{di2018ultra,han2023two}. In \cite{liu2022joint}, a satellite assists ground networks in providing backhaul links to users in remote areas, and both user association and beamforming are optimized to maximize the sum rate.
In \cite{han2023two}, satellite networks support task offloading of remote Internet-of-Things (IoT) devices. An optimization problem about user association and spectrum allocation is formulated to minimize the  task offloading delay. The above works only focus on the overall performance of SAGIN. However, motivating different operators to share resources and services is also a vital problem. As the operators are inherently competitors with conflicting interests, they will be reluctant to construct SAGIN if resource and service sharing cannot guarantee their revenue, i.e., they cannot achieve mutual benefits.
In \cite{deng2020ultra}, a pricing mechanism is proposed to ensure each 
operator's revenue. However, it cannot fully leverage the resources in SAGIN to achieve a high overall performance.
Therefore, it is vital to simultaneously achieve mutual benefits among operators and maximize the overall performance of SAGIN. 

The recently emerging symbiotic communication provides a promising paradigm to tackle the above problem, which aims to optimize the collective objectives of diverse radio systems and realize mutual benefits among them via sharing resources and services \cite{liang2022symbitoic}.
% symbiotic communication leverages the analogy to the natural ecosystem in biology and regards the electromagnetic space as a radio ecosystem \cite{liang2022symbitoic}. It aims to establish a symbiotic relationship among different radio systems, namely, optimizing the collective objectives of varying radio systems and realizing mutual benefits among them via the proper configuration of resources and services.
Symbiotic communication has been widely studied in backscatter communications \cite{long2020symbiotic,zhou2023assistance,zhang2024channel,wang2024multi,chen2023transmission}. In \cite{long2020symbiotic}, the authors investigate the symbiotic communication in a passive IoT system, consisting of an active transmission and a backscatter transmission. Specifically, the active transmission shares spectrum and energy resources with the backscatter transmission, and in return the backscatter transmission provides a beneficial multi-path to the active transmission via reflecting the incident signal. As a result, the active and backscatter transmission achieve mutual benefits in this system, realizing higher sum rate.
% the backscatter transmission obtains transmission opportunity, and the active transmission has the multi-path diversity gain, achieving mutual benefits and higher sum rate.
Besides, the symbiotic communication is realized in the coexistence of a cellular network and WiFi system in \cite{tan2019qos}. Particularly, the WiFi system shares spectrum with the cellular network, and the cellular network adjusts transmission parameters to guarantee the performance of WiFi system.
% is realized by optimizing the user association scheme and allocation of resources.
Resource allocation in symbiotic communication is also studied.
In \cite{he2023user,he2023joint}, both user association and beamforming are optimized to realize the symbiotic communication in multi-operator cellular networks. Symbiotic communication can also be a prominent solution to construct SAGIN.
% Since many different networks coexist in SAGIN, symbiotic communication can also be a prominent solution to the resource and service configuration problem in SAGIN.
In \cite{cheng2022blockchain}, the authors propose a blockchain and deep learning based framework to optimize resource allocation in SAGIN with the guidance of symbiotic communication.
However, it lacks an explicit formulation to strictly guarantee individual benefits after sharing resources and services.

% Motivated by the above works, in this paper, we investigate joint user association, resource allocation, and beamforming design in SAGIN from a symbiotic communication perspective. 
Motivated by the above works, in this paper, we investigate joint user association, resource allocation, and beamforming design in SAGIN from a symbiotic communication perspective. Specifically, the considered SAGIN system consists of a GNO and SNO engaging in inter-operator sharing, i.e., sharing spectrum and allowing users to access arbitrary networks. The GNO has base stations (BSs) connected to the core network by fiber. On the other hand, the SNO has STs connected to the core network via a multi-beam LEO satellite, and this LEO satellite serves STs in the TDMA manner. Given the high cost of deploying fiber in remote areas, the number of BSs is fewer than that of STs.
Our objective is to maximize the weighted sum rate (WSR) of all users while achieving mutual benefits among operators in terms of revenue. To characterize revenue of operators, we introduce a sharing coefficient that reflects the impact of inter-operator sharing.
% which influences the revenue acquired from serving users.
The revenue may experience losses when only pursuing the maximization of the WSR.
% With a small sharing coefficient, the operator may suffer revenue loss during service sharing.
To achieve mutual benefits, we propose a mutual benefit constraint (MBC) to guarantee that each operator can obtain higher revenue with inter-operator sharing.
% Besides, we introduce a sharing coefficient to characterize the revenue of operators. Then, we propose a mutual benefit constraint (MBC) to guarantee that each operator can obtain higher revenue with inter-operator sharing, achieving mutual benefits.
% To achieve mutual benefits among SNO and GNO, we propose a mutual benefit constraint (MBC) to guarantee the revenue of each operator, and the weighted sum rate (WSR) of the SAGIN system is optimized. 
% Besides, considering the additional cost incurred by service sharing, 
% considered by jointly optimizing the power and time allocation of the LEO, user association scheme and beamforming design. 
% Besides, we introduce a sharing coefficient to model the impact of the additional cost incurred by inter-operator sharing on the revenue of an operator.
The main contributions of this work can be summarized as follows:
\begin{itemize}
  \item We investigate a symbiotic resource and service sharing paradigm in the SAGIN, which aims to maximize the overall system performance and realize mutual benefits between the GNO and SNO. We formulate an optimization problem regarding user association, resource allocation, and beamforming design to maximize the WSR, subject to the MBCs and STs' backhaul capacity constraints.
  \item We introduce a sharing coefficient to characterize the revenue of each network operator, based on which the MBC is formulated to ensure each operator can obtain revenue gains after sharing spectrum and services.
  \item We propose a centralized algorithm based on successive convex approximation (SCA) for the formulated WSR maximization problem. Concerning the large scale of the SAGIN, it is challenging to implement the centralized algorithm. Then, we develop a distributed algorithm, which consists of the Lagrangian dual decomposition based algorithm for user association and consensus alternating direction method of multipliers (ADMM) based algorithm for beamforming design and resource allocation.
  \item Simulation results show that both the proposed algorithms can outperform the considered benchmarks, and the distributed algorithm approaches a performance close to that of the centralized algorithm. Moreover, in comparison to operators in the non-symbiotic case, each operator in the symbiotic case attains higher revenue after sharing spectrum and services. This implies that operators achieve mutual benefits and realize a symbiotic resource and service sharing paradigm in the SAGIN. 
% the results demonstrate that the MBCs can help network operators achieve mutual benefits and realize a symbiotic resource and service sharing paradigm in the SAGIN.
  % MBCs are necessary to realize symbiotic communication in SAGIN by comparing revenue of operators obtained with MBCs and that without MBCs.
\end{itemize}

The remainder of this paper is organized as follows. Section \ref{sec:sysmodel} illustrates the system model. In Section \ref{sec:opt_problem}, we formulate the WSR maximization problem for the considered SAGIN system. In Section \ref{sec:cen_alg}, an SCA-based centralized algorithm and finding initial point algorithm are proposed for the WSR maximization problem. In Section \ref{sec:dis_alg}, a distributed algorithm based on consensus ADMM and Lagrangian dual decomposition methods is developed. In Section \ref{sec:sim_result}, the extensive simulation results are presented.
% and the revenue of SNO and GNO obtained with and without MBCs are analyzed. 
Finally, Section \ref{sec:conclu} concludes this paper.

Notations: The notations in this paper are listed as follows. The scalars, column vectors and matrices are represented by lowercase, bold lowercase and uppercase symbols (e.g., $x$, $\mathbf{x}$ and $\mathbf{X}$), respectively. $\left\vert \mathcal{A} \right\vert$ denotes the cardinality of set $\mathcal{A}$. $\mathbb{C}$ denotes the set of complex numbers. Main notations are summarized in Table \ref{tab:notation}.
%  Besides, $\mathbb{R}$, $\mathbb{R}_+$ and $\mathbb{R}_{++}$ respectively denote the sets of real numbers, nonnegative real numbers and positive real numbers.

\begin{table*}[htbp]
  \centering
  \caption{Main Notations}\label{tab:notation}
  \renewcommand\arraystretch{1.3}
  \resizebox{0.6\textwidth }{!}{
  \begin{tabular}{l| l}
    \hline
    \textbf{Notation} & \textbf{Meaning} \\ \hline
    % \hline 
    % after \\: \hline or \cline{col1-col2} \cline{col3-col4} ...
    $ G, S, \mathcal{O}$ & GNO, SNO, set of network operators \\ 
    % \hline
    $\mathcal{B}, \mathcal{B}_G, \mathcal{B}_S$ & Set of BSs and STs, set of BSs, set of STs  \\ 
    % \hline
    $\mathcal{U}, \mathcal{U}_G, \mathcal{U}_S$ & Set of all users, set of users from GNO, set of users from SNO \\ 
    % \hline
    $N_B, N_B^\prime$ & The number of BSs, the number of STs \\ 
    % \hline
    $N_U, N_U^\prime$ & The number of users from GNO, the number of users from SNO \\
    % \hline
    $N_L, N_t$ & The number of feeds of the LEO satellite, the number of transmit antennas of BSs/STs \\
    % \hline
    $P_i, P_{Sat}$ & Maximum transmit power of BS/ST $i$, maximum transmit power of the LEO satellite \\ 
    % \hline
    $G_T(\cdot), G_R$ & Transmit antenna gain of the $\ell$-th spot beam, receive antenna gain of STs\\ 
    % \hline
    % $\zeta_{\ell,i}$ & Off-boresight angle between the the $\ell$-th spot beam and ST $i$ \\ \hline
    $f_i$ & Downlink channel between the LEO satellite and ST $i$ \\ 
    % \hline
    $\mathbf{h}_{i,k}^n$ & Downlink channel between BS/ST $i$ and user $k$ over band $n$ \\ 
    % \hline
    % $B_{Ka}, B_C$ & Bandwidth of Ka band, bandwidth of C band  \\ \hline
    $p_{\ell}, t_{\ell,i}$ & Power of $\ell$-th spot beam, time allocated to ST $i$ by $\ell$-th spot beam \\ 
    % \hline
    $x_{i,k}$ & User association variable for BS/ST $i$ and user $k$ \\ 
    % \hline
    $\mathbf{w}_{i,k}^n$ & Transmit beamforming that BS/ST $i$ assigns to user $k$ over band $n$  \\ 
    % \hline
    % $\tilde{I}_{i,k}^n, I_{i,k}^n$ & Intra-operator interference, inter-operator interference \\ \hline
    % $\sigma^2_s, \sigma^2_t$ & Noise power receive at STs, noise power received at users \\ \hline
    % $R_{i,k}, C_{\ell,i}$ & Achievable rate of user $k$ associated with BS/ST $i$, backhaul capacity for ST $i$ served by $\ell$-th spot beam \\ \hline
    $\delta_z$ &  Sharing coefficient of operator $z$ \\ 
    % \hline
    $U_z^0, U_z$ & Revenue without inter-operator sharing, revenue with inter-operator sharing \\ \hline
    % $\gamma_{i,k}^n, \Gamma_{i,k}^n, \varphi_{i,k}^n$ & SINR, lower bound of SINR, upper bound of SINR for user $k$ associated with BS/ST $i$ over band $n$ \\ \hline
    % $\beta_{i,k}^n, \rho_{i,k}^n$ & Upper bound and lower bound of interference received at user $k$ associated with BS/ST $i$ over band $n$, respectively \\ \hline
    % $\dot{\Gamma_{i,k}^{n}}, \dot{\theta_{i,k}^{n}}, \dot{\psi_{i,k}^{n}}$ & Global variables in Algorithm. \ref{Alg5} SINR, upper bound, and lower bound of inter-operator interference, respectively \\ \hline
    % $\Gamma_{i,k}^{n,(z)}, \theta_{i,k}^{n,(z)}, \psi_{i,k}^{n,(z)}$ & Local variables optimized by operator $z$ in Algorithm. \ref{Alg5}  \\ \hline
    % $\lambda_z, \nu_{\Gamma}, \nu_{\theta}, \nu_{\psi}$ & Dual variables introduced to Lagrangian \\ \hline
    % $\eta, \varrho, \xi$ & Step size, penalty parameters. \\ \hline
  \end{tabular}
  }
  \vspace{-1em}
\end{table*}

\section{System Model}\label{sec:sysmodel}

\begin{figure}[tp]
  % \vspace{-1em}
\begin{center}
\epsfxsize=0.4\textwidth \leavevmode
\epsffile{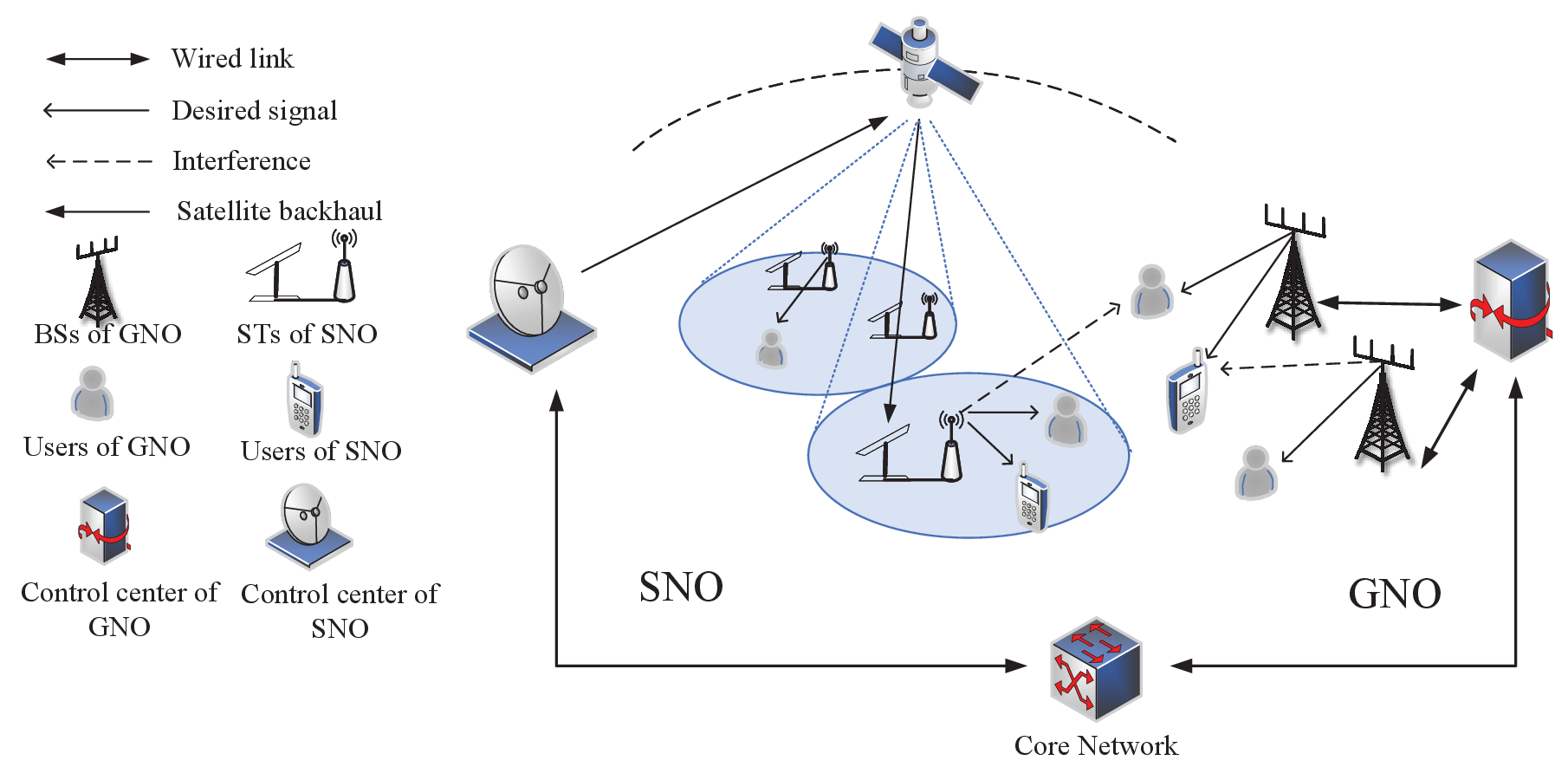}
\caption{The considered SAGIN with inter-operator sharing.}\label{fig:sys_model}
\end{center}
\vspace{-1em}
\end{figure}

\subsection{SAGIN Network}
As shown in Fig. \ref{fig:sys_model}, we consider the downlink of a SAGIN system, which consists of a GNO and SNO, and we define $\mathcal{O}=\{G,S\}$ as the set of operators. Specifically, the GNO has $N_U$ users and deploys $N_B$ BSs, directly connected to the core network via fiber that can provide large backhaul capacity. On the other hand, the SNO has $N_U^\prime$ subscribed users and deploys $N_B^\prime$ STs to serve users in remote areas, and it exploits a multi-beam LEO satellite to provide backhaul links to STs over the Ka band. Since the payload of the satellite is constrained, the backhaul capacity it provides to STs is limited. Besides, both operators have a dedicated C band for serving users.
We denote $\mathcal{B}_G=\{1,2,\ldots,N_B\}$ and $\mathcal{B}_S=\{N_B+1,N_B+2,\ldots,N_B+N_B^\prime\}$ as the set of all BSs and STs, respectively. Denote the set of subscribed users of the GNO as $\mathcal{U}_G=\{1,2,\ldots,N_U\}$ and that of the SNO as $\mathcal{U}_S=\{N_U+1,N_U+2,\ldots,N_U+N_U^\prime\}$. Then, we denote $\mathcal{B}\triangleq \cup_{n\in\mathcal{O}} \mathcal{B}_n$ and $\mathcal{U}\triangleq \cup_{n\in\mathcal{O}} \mathcal{U}_n$. We consider that all BSs and STs are equipped with $N_t$ transmit antennas while all users are equipped with a single antenna.
In the considered SAGIN, inter-operator sharing is adopted, where all BSs and STs share their C bands to serve users, and the users can access arbitrary BSs or STs for service. Furthermore, we assume that each operator has a control center to
collect necessary information (e.g., channel state information (CSI), and user locations.).
% in which $1\leq i \leq N_B$ represents the BSs while $N_B+1\leq i\leq N_B+N_B^\prime$ indicates the STs. Denote all users as $\mathcal{text}$
% each operator $m$ deploys a set $\mathcal{B}_m$ of BSs equipped with $N_t$ transmit antennas and has a set $\mathcal{U}_m$ of users equipped with a single antenna. Then, we denote $\mathcal{B}\triangleq \cup_{m\in\mathcal{M}} \mathcal{B}_m$ and $\mathcal{U}\triangleq \cup_{m\in\mathcal{M}} \mathcal{U}_m$ as all BSs and users in the considered multi-operator network, respectively. 

% We assume each operator $n$ has an exclusive C band of flat fading to serve its own users, and the downlink channel between BS/ST $i\in\mathcal{B}_m$ and user $k\in\mathcal{U}_n$ in the spectrum band of operator $c\in\mathcal{M}$ is given by
% \begin{equation}\label{eq:channel_gain}
%   \mathbf{h}_{i,k}^{c} = \sqrt{d_{i,k}^{c}}\tilde{\mathbf{h}}_{i,k}^{c},
% \end{equation}
% where $L_{i,k}^{c}$ accounts for the path-loss between BS/ST $i$ and user $k$, and $\tilde{\mathbf{h}}_{i,k}^{c} \in \mathbb{C}^{1\times N_t}$ is the small-scale fading coefficient. 

\subsection{Space-Ground Communication Model}\label{subsec:model_s2g}
We consider that the multi-beam LEO satellite is equipped with $N_L$ feeds, and each can generate a spot beam to provide backhaul links to STs. Specifically, each beam serves STs within its coverage in the TDMA manner. Besides, universal frequency reuse is adopted in space-ground communication, which means all beams operate over the Ka band. Then, we assume each ST is pointed towards the satellite, and the received signal of ST $i$ within the coverage of beam $\ell$ is
\begin{equation}\label{eq:bkhaul_signal}
  \begin{aligned}
    y_{\ell, i} &= \sqrt{p_{\ell}G_T(\zeta_{\ell,i})G_R} f_i s_\ell \\
    & + \sum_{\substack{\ell^\prime=1,\\\ell^\prime \neq \ell}}^{N_L} 
    % \sum_{j\in\mathcal{S}_{\ell^\prime}} 
    \sqrt{p_{\ell^\prime}G_T(\zeta_{\ell^\prime,i})G_R} f_i s_{\ell^\prime} + z_i,
  \end{aligned}
\end{equation}
where $p_{\ell}$ is the transmit power of $\ell$-th spot beam; 
% $\mathcal{S}_{\ell^\prime}$ denotes STs served by $\ell$-th spot beam;
$s_{\ell}$ is the symbol transmitted by $\ell$-th spot beam, which is assumed to be an independent random variable with zero mean and unit variance; and $z_{i} \sim \mathcal{CN}(0, \sigma_s^2) $ is the additive white Gaussian noise (AWGN).
$G_R$ is the receive antenna gain of STs, $G_T(\zeta_{\ell,i})$ is the transmit antenna gain of the $\ell$-th spot beam toward ST $i$, and $\zeta_{\ell,i}$ is the off-boresight angle between the center of $\ell$-th spot beam and ST $i$. Specifically, $G_T(\zeta_{\ell,i})$ is given by
\begin{equation}
  G_T(\zeta_{\ell,i}) = G_T^0 \cdot 4 \left\vert\frac{J_1(\kappa a\sin\zeta_{\ell,i})}{\kappa a\sin\zeta_{\ell,i}} \right\vert^2, \nonumber
\end{equation}
where $G_T^0$ is the maximum transmit gain, $\kappa = \frac{2\pi}{\lambda_{Ka}}$ is the wave number, $a$ is the radius of the dish antenna, and $\lambda_{Ka}$ is the carrier length of Ka band.
$f_i$ is the downlink channel between the LEO satellite and ST $i$, which is given by
\begin{equation}\label{eq:sg_channel_gain}
  f_i = \sqrt{d_i}\tilde{f}_i,
\end{equation}
where $d_i$ accounts for the path-loss between the satellite and ST $i$, while $\tilde{f}_i$ is the SR fading. The probability distribution function of SR fading gain $\vert \tilde{f}_i \vert^2$ is given by
\begin{equation}\label{eq:sr_gain}
  \begin{aligned}
  f_{\vert \tilde{f}_i \vert^2}(s;b,m,\Omega) &= \frac{1}{2b}\left(\frac{2bm}{2bm+\Omega}\right)^m\exp\left(-\frac{s}{2b}\right)\times \\
  & _1F_1\left(m,1,\frac{\Omega s}{2b(2bm+\Omega)}\right),
  \end{aligned}
\end{equation}
where $\Omega$ is the average power of a LoS component, $2b$ is the average power of the scatter components, $m$ is the Nakagami parameter, and $_1F_1(\cdot,\cdot,\cdot)$ is the confluent hypergeometric function. Based on \eqref{eq:bkhaul_signal}, the backhaul capacity for ST $i$ served by $\ell$-th spot beam is
\begin{equation}\label{eq:bkhaul_cap}
  \begin{aligned}
  C_{\ell,i} &= B_{Ka}t_{\ell,i} \\
  & \times \log_2\left(1+\frac{p_{\ell} G_T(\zeta_{\ell,i})G_R\vert f_i \vert^2}{\sum_{\ell^\prime \neq \ell}p_{\ell^\prime}G_T(\theta_{\ell^\prime,i})G_R\vert f_i \vert^2 + \sigma_s^2}\right),
  \end{aligned}
\end{equation}
where $B_{Ka}$ denotes the bandwidth of Ka band, $t_{\ell,i}$ is the fraction of time resource $\ell$-th spot beam allocates to ST $i$.

\subsection{Ground-Ground Communication Model}\label{subsec:model_g2g}
% We assume both GNO and SNO have an exclusive C band of flat fading to serve their own users, 
Universal frequency reuse is also adopted by BSs/STs to serve users, and the downlink channel between BS/ST $i$ and user $k$ over the spectrum band of operator $n\in\mathcal{O}$ is given by
\begin{equation}\label{eq:channel_gain}
  \mathbf{h}_{i,k}^{n} = \sqrt{d_{i,k}}\tilde{\mathbf{h}}_{i,k}^{n},
\end{equation}
where $d_{i,k}$ is the path-loss between BS/ST $i$ and user $k$, and $\tilde{\mathbf{h}}_{i,k}^{n} \in \mathbb{C}^{1\times N_t}$ is the small-scale fading coefficient. 
With inter-operator sharing, the users of one operator can access other operators' networks, and each BS/ST can operate over all spectrum bands. In this case, the received signal of user $k$ associated with BS/ST $i\in\mathcal{B}_m$ over the band $n$ is 
\begin{equation}\label{eq:sha_signal}
  \begin{aligned}
    y_{i,k}^n &= \mathbf{h}_{i,k}^n \mathbf{w}_{i,k}^ns_{k} \\
    &+ \sum_{\substack{j\in\mathcal{B}_m,k^\prime\in\mathcal{U}\\(j,k^\prime)\neq(i,k)}}
    % \sum_{j\in\mathcal{B}_m/\{i\}} \sum_{k^\prime\in\mathcal{U}/\{k\}} 
    \mathbf{h}_{j,k}^n \mathbf{w}_{j,k^\prime}^n s_{k^\prime}\\
    & + \sum_{j\in\mathcal{B}\setminus\mathcal{B}_m,k^\prime\in\mathcal{U}}
    % \sum_{j\in\mathcal{B}/\mathcal{B}_m} \sum_{k^\prime\in\mathcal{U}/\{k\}}
    \mathbf{h}_{j,k}^n \mathbf{w}_{j,k^\prime}^n s_{k^\prime} + z^n,
  \end{aligned}
\end{equation}
where $\mathbf{w}_{i,k}^n\in\mathbb{C}^{N_t \times 1}$ is the transmit beamforming that BS/ST $i$ assigns to user $k$ over the band $n$; $s_{k}$ denotes the symbol transmitted to user $k$; and $z^n \sim \mathcal{CN}(0, \sigma_t^2) $ is the AWGN over band $n$.
The instantaneous signal-to-interference-plus-noise ratio (SINR) of user $k$ in the band $n$ can be written as
\begin{equation}\label{eq:sha_sinr}
  \gamma_{i,k}^{n} = \frac{\left\vert \mathbf{h}_{i,k}^n \mathbf{w}_{i,k}^n \right\vert ^2}{\tilde{I}_{i,k}^n + I_{i,k}^n + \sigma_t^2},
\end{equation}
where $\tilde{I}_{i,k}^n$ denotes the intra-operator interference while $I_{i,k}^n$ denotes the inter-operator interference. Specifically, $\tilde{I}_{i,k}^n$ and $I_{i,k}^n$ are as follows:
\begin{subequations}\label{eq:sha_interfer}
  \begin{align}
    &\tilde{I}_{i,k}^n = \sum_{\substack{j\in\mathcal{B}_m,k^\prime\in\mathcal{U}\\(j,k^\prime)\neq(i,k)}} \left\vert\mathbf{h}_{j,k}^n \mathbf{w}_{j,k^\prime}^n\right\vert ^2, \label{eq:sha_intra_interfer}\\
    &I_{i,k}^n = \sum_{j\in\mathcal{B}\setminus\mathcal{B}_m,k^\prime\in\mathcal{U}} \left\vert\mathbf{h}_{j,k}^n\mathbf{w}_{j,k^\prime}^n\right\vert ^2. \label{eq:sha_inter_interfer}
  \end{align}
\end{subequations}
Thus, the achievable rate for user $k$ with inter-operator sharing is shown as
\begin{equation}\label{eq:sha_rate}
  R_{i,k} = \sum_{n\in\mathcal{O}}\log_2(1 + \gamma_{i,k}^n).
\end{equation}

% \begin{equation}\label{eq:sha_rate}
%   R_{mi,nk}(t) = \sum_{c\in\mathcal{M}}\log_2(1 + \gamma_{mi,nk}^c(t)).
% \end{equation}

\section{Problem Formulation}\label{sec:opt_problem}

As the network performance heavily depends on the configuration of resources and services, we formulate a WSR maximization (WSRM) problem through jointly optimizing user association, resource allocation, and beamforming design. Specifically, since the backhaul capacity offered by the LEO satellite is limited, the investigated WSRM problem is subject to the backhaul capacity constraints of STs. Besides, we introduce a sharing coefficient $\delta_z$ for operator $z$ to characterize its revenue. Based on the revenue, MBCs are proposed to realize mutual benefits, realizing a symbiotic resource and service sharing paradigm.
% to encourage operators to share spectrum and services, MBCs are proposed to guarantee unimpaired revenue for all operators. 
% Specifically, we formulate WSR maximization (WSRM) problems for two cases: the case with spectrum and service sharing, and the case without spectrum or service sharing. The latter one can be adopted as a benchmark to demonstrate the performance improvement of spectrum and service sharing.

\subsection{Revenue of Operators}\label{subsec:s2_sha}
While inter-operator sharing can improve the overall performance, it concurrently incurs additional costs due to inter-operator interference and serving other operators' users.
Hence, to compensate for the additional costs and incentivize operators to engage in inter-operator sharing,
a sharing coefficient $\delta_z \in [0, 1]$ is introduced for each operator $z$ to characterize its revenue.
As the revenue of an operator is usually related to the transmission rate that it can offer, we define the revenue of operator $z$ as
\begin{equation}\label{eq:revenue_op}
  % U_z = \sum_{n\in\mathcal{O}, k\in\mathcal{U}_n}U_{z,k}=\sum_{i\in\mathcal{B},k\in\mathcal{U}}x_{i,k}\alpha_{i,k}^zR_{i,k},
  U_z =\sum_{i\in\mathcal{B},k\in\mathcal{U}}x_{i,k}\alpha_{i,k}^zR_{i,k},
\end{equation}
where 
% $U_{z,k}$ denotes the revenue operator $z$ obtains when user $k$ is served by BS/ST $i$, and 
$x_{i,k}\in\{0,1\}$ is the user association variable, $x_{i,k}=1$ if user $k$ is associated with BS/ST $i$, and $x_{i,k}=0$ otherwise. Let BS/ST $i$ belong to operator $m$, and user $k$ subscribe to operator $n$, then $\alpha_{i,k}^z$ is defined as
\begin{equation}\label{eq:roaming_coefficient}
  \alpha_{i,k}^z = \begin{cases}
                      1, & \mbox{if } z=m=n,\\
                      \delta_m, & \mbox{if } z=m,m \neq n,\\
                      1-\delta_m, & \mbox{if } z=n,m \neq n,\\
                      0, & \mbox{otherwise.}
               \end{cases}
\end{equation}
From \eqref{eq:revenue_op} and \eqref{eq:roaming_coefficient}, it can be observed that the sharing coefficient $\delta_m$ can affect 
the revenue (or compensation) obtained by operator $m$ from serving other operators' users. Thus, this sharing coefficient has a significant impact on service sharing. 
For example, when $\delta_m=0$, other operators need not compensate operator $m$ for offloading their users to it, and operator $m$ 
cannot obtain any revenue. 
As a result, for higher revenue, operator $m$ prefers to serve its users, while other operators prefer to offload their users to operator $m$.

\subsection{WSRM Formulation}
In the studied SAGIN system, each user can be associated with any BSs/STs, and each BS/ST can operate in all bands. To efficiently maximize performance of the whole system while ensuring that operators achieve mutual benefits, we formulate the WSRM problem from the symbiotic communication perspective, which is written as
\begin{subequations}\label{eq:p1}
  \begin{align}
    \mathbf{P1:} & \max_{\mathbf{p,t,W,x}} \sum_{i\in\mathcal{B}}\sum_{k\in\mathcal{U}} x_{i,k}b_{i,k}R_{i,k} \label{eq:p1_obj} \\
    & \text{s.t. } U_z \geq U_z^0, \forall z \in\mathcal{O}, \label{eq:mbc} \\
    & \quad\ \
    \sum_{i \in \mathcal{B}_S} t_{\ell,i} \leq 1, \forall \ell=1,\cdots,N_L, \label{eq:p1_ta} \\
    & \quad\ \
    \sum_{k\in\mathcal{U}} B_CR_{i,k} \leq C_{\ell,i}, \forall i \in \mathcal{B}_S, \label{eq:p1_bkhaul} \\
    & \quad\ \
    \sum_{n\in\mathcal{O}}\sum_{k\in\mathcal{U}} \parallel\mathbf{w}_{i,k}^n\parallel^2 \leq P_i, \forall i \in \mathcal{B}, \label{eq:p1_bs_power} \\
    & \quad\ \
    \sum_{\ell=1}^{N_L} p_{\ell} \leq P_{Sat}, \label{eq:p1_sat_power} \\
    & \quad\ \
    \sum_{i\in\mathcal{B}} x_{i,k} \leq 1, \forall k \in \mathcal{U}, \label{eq:p1_ua} \\
    & \quad\ \
    x_{i,k} \in \{0, 1\}, \forall i\in\mathcal{B},\forall k \in \mathcal{U}, \label{eq:p1_binary}
    % & \quad\ \
  \end{align}
\end{subequations}
where $\mathbf{p}\triangleq\{p_{\ell}\}_{\ell=1,2,\cdots,N_L}$, $\mathbf{t}\triangleq\{t_{\ell,i}\}_{\ell=1,2,\cdots,N_L, i\in\mathcal{B}_S}$, $\mathbf{w}\triangleq\{\mathbf{w}_{i,k}^n\}_{n\in\mathcal{O}, i\in\mathcal{B},k\in\mathcal{U}}$, $\mathbf{x}\triangleq\{x_{i,k}\}_{\in\mathcal{B},k\in\mathcal{U}}$, $b_{i,k}$ denotes the weight for the rate of user $k$ associated with BS $i$, and $P_i$ and $P_{Sat}$ are the maximum transmit power of BS/ST $i$ and the LEO satellite, respectively. Besides, \eqref{eq:p1_ta} is the time resource allocation constraint; \eqref{eq:p1_bkhaul} is the backhaul constraint for each ST, and $B_C$ is the bandwidth of C band; \eqref{eq:p1_bs_power} is the power constraint for each BS/ST, which indicates the spectrum sharing is enabled; \eqref{eq:p1_sat_power} is the power constraint for the LEO satellite; \eqref{eq:p1_ua} means that each user can be associated with only one BS/ST, which indicates the service sharing is enabled, and \eqref{eq:p1_binary} indicates that $x_{i,k}$ is a binary variable. Further, \eqref{eq:mbc} is referred to as the MBC, which ensures that each operator can obtain benefits. Specifically, $U_z$ is the revenue operator $z$ can obtain with inter-operator sharing, while $U_z^0$ is the revenue obtained without inter-operator sharing. The details of determining $U_z^0$ will be illustrated in Section \ref{subsec:Uz0}.
% which can be obtained by solving the WSRM in the XXXXXXXXXX.

In the above formulation, we do not take $x_{i,k}$ into consideration in the left-hand-side (LHS) of constraint \eqref{eq:p1_bkhaul} because $R_{i,k}$ will automatically be $0$ when $x_{i,k}=0$ at the optimality. To see this, consider the optimal beamforming and user association scheme of \textbf{P1} for user $k$ is $x_{i,k}=0$ while $\mathbf{w}_{i,k}^n\neq0$. Since $x_{i,k}$ is considered in the objective \eqref{eq:p1_obj}, by setting $\mathbf{w}_{i,k}^n=0$, a feasible solution with higher objective value can be obtained. Then, according to \eqref{eq:sha_sinr} and \eqref{eq:sha_rate}, $R_{i,k}$ will be $0$, too.

\section{Centralized Optimization Algorithms for WSRM}\label{sec:cen_alg}

In this section, we leverage the block successive maximization (BSM) method \cite{razaviyayn2013unified} to solve \textbf{P1}, where the above problems are solved in three stages iteratively. 
In the first stage, the beamforming $\mathbf{w}$ of BSs/STs and power $\mathbf{p}$ of the multi-beam LEO satellite are optimized for the given user association scheme $\mathbf{x}$ and time allocation $\mathbf{t}$ using the SCA method. 
Then, $\mathbf{w}$ and $\mathbf{t}$ are optimized with fixed $\mathbf{x}$ and $\mathbf{p}$.
In the last stage, $\mathbf{x}$ is optimized for the given beamforming design and resource allocation. 
Next, we also provide a method to obtain the revenue for each operator without inter-operator sharing ($U_z^0$).
Besides, we propose an initial point searching method for the SCA-based algorithm.
% as the feasible initial points are crucial for the SCA-based algorithm, 
% we propose an finding feasible initial point algorithm based on the penalty method. to search for the feasible initial points. 
Finally, the proposed centralized algorithm can be implemented in the assigned control center.

% Furthermore, we assume that each operator has a control center to collect necessary information (e.g., channel state information (CSI), and user location.) from its all BSs/STs.

% Based on the collected information, one control center can be assigned to optimize the global user association, resource allocation, and beamforming design.

\subsection{Resource Allocation and Beamforming Optimization for WSRM}\label{subsec:cen_ptw}

% It is obvious that \textbf{P2} can be regarded as a special case of \textbf{P1}.
% Therefore, we focus on the optimization of resource allocation and beamforming design in \textbf{P1} in the following. The algorithm proposed for \textbf{P1} can be applied to optimizing $\{\mathbf{p,t,w}\}$ in \textbf{P2}. 
\textbf{P1} is a non-convex and mixed-integer problem, which is hard to tackle. To make the problem tractable, 
% we first relax $x_{mi,nk}$ by letting $0 \leq x_{mi,nk} \leq 1$. Then, 
we introduce some auxiliary variables and rewrite \textbf{P1} equivalently as
\begin{subequations}\label{eq:p11}
  \begin{align}
    \mathbf{P1.1:} & \max_{\mathbf{p,t,w,x,\Gamma},\boldsymbol{\beta,\varphi,\rho}} \sum_{i\in\mathcal{B}}\sum_{k\in\mathcal{U}} x_{i,k}b_{i,k}r_{i,k} \label{eq:p11_obj} \\
    & \text{s.t. }\eqref{eq:p1_ta}, \eqref{eq:p1_sat_power}, \eqref{eq:p1_ua}, \eqref{eq:p1_binary}, \label{eq:p11_p1_cons}\\
    & \quad\ \
    \Gamma_{i,k}^n \leq \frac{\left\vert\mathbf{h}_{i,k}^n \mathbf{w}_{i,k}^n\right\vert^2}{\beta_{i,k}^n}, \label{eq:p11_sinr_low_bound}\\
    & \quad\ \
    \beta_{i,k}^n \geq \tilde{I}_{i,k}^n + I_{i,k}^n + \sigma_t^2, \label{eq:p11_interfer_up_bound}\\
    & \quad\ \
    \varphi_{i,k}^n \geq \frac{\left\vert\mathbf{h}_{i,k}^n \mathbf{w}_{i,k}^n\right\vert^2}{\rho_{i,k}^n}, \label{eq:p11_sinr_up_bound}\\
    & \quad\ \
    \rho_{i,k}^n \leq \tilde{I}_{i,k}^n + I_{i,k}^n + \sigma_t^2, \label{eq:p11_interfer_low_bound}\\
    & \quad\ \
    \sum_{i\in\mathcal{B}}\sum_{k\in\mathcal{U}}x_{i,k}\alpha_{i,k}^zr_{i,k} \geq U_z^0, \label{eq:p11_mbc}\\
    & \quad\ \
    \sum_{n\in\mathcal{O}}\sum_{k\in\mathcal{U}} B_C\log_2(1+\varphi_{i,k}^n) \leq C_{\ell,i}, \label{eq:p11_bkhaul}\\
    % & \quad\ \
    % 0 \leq x_{mi,nk} \leq 1,  \label{eq:p21_c}\\
    & \quad\ \
    r_{i,k}=\sum_{n\in\mathcal{O}}\log_2(1+\Gamma_{i,k}^n), \label{eq:p11_rate}
  \end{align}
\end{subequations}
where $\mathbf{\Gamma}\triangleq\{\Gamma_{i,k}^n\}_{n\in\mathcal{O}, i\in\mathcal{B},k\in\mathcal{U}}$ represents the lower bound of SINR for user $k$ associated with BS/ST $i$ over band $n$; $\boldsymbol{\beta}\triangleq\{\beta_{i,k}^n\}_{n\in\mathcal{O}, i\in\mathcal{B},k\in\mathcal{U}}$ represents the upper bound of total interference and noise received at user $k$; $\boldsymbol{\varphi}\triangleq\{\varphi_{i,k}^n\}_{n\in\mathcal{O}, i\in\mathcal{B},k\in\mathcal{U}}$ represents the upper bound of SINR for user $k$ associated with BS/ST $i$ over band $n$; $\boldsymbol{\rho}\triangleq\{\rho_{i,k}^n\}_{n\in\mathcal{O}, i\in\mathcal{B},k\in\mathcal{U}}$ represents the lower bound of interference and noise received at user $k$.
% Note that we do not take $x_{i,k}$ into consideration in the left-hand-side (LHS) of constraint \eqref{eq:p11_bkhaul}, because $\varphi_{i,k}^n$ and $\mathbf{w}_{i,k}^n$ must be $0$ when $x_{i,k}=0$ at the optimality. To see this, consider the optimal beamforming and association scheme of \textbf{P1.1} for user $k$ is $x_{i,k}=0$ while $\varphi_{i,k}^n\neq0$. Since $x_{i,k}$ is considered in the objective of \textbf{P1.1} \eqref{eq:p11_obj}, by setting $\varphi_{i,k}^n=0$, a feasible solution with higher objective value can be obtained. Then, according to \eqref{eq:p11_sinr_up_bound}, $\mathbf{w}_{i,k}^n$ will be $0$, too.

By introducing $\mathbf{\Gamma}$ and $\boldsymbol{\varphi}$, the MBC \eqref{eq:mbc} and backhaul constraint \eqref{eq:p11_bkhaul} can be satisfied at the optimality, respectively. Besides, it is obvious that the constraints from \eqref{eq:p11_sinr_low_bound} to \eqref{eq:p11_interfer_low_bound}  should hold with equality at the optimality. Thus, the equivalence between \eqref{eq:p1} and \eqref{eq:p11} is guaranteed.
\textbf{P1.1} is still challenging to tackle due to the non-convex constraints \eqref{eq:p11_sinr_low_bound}, \eqref{eq:p11_interfer_low_bound} and \eqref{eq:p11_bkhaul}. To make \textbf{P1.1} more tractable, we propose to exploit the SCA method to deal with these non-convex constraints.

Note that the right-hand-side (RHS) of \eqref{eq:p11_sinr_low_bound} is a joint convex quadratic-over-linear function with respect to $\mathbf{w}_{i,k}^n$ and $\beta_{i,k}^n$. With this fact, we can use the SCA method to approximate the RHS of \eqref{eq:p11_sinr_low_bound} with a lower bound. The RHS of \eqref{eq:p11_sinr_low_bound} can be approximated with the first-order Taylor expression as
\begin{equation}\label{eq:sca_gamma}
    \frac{\left\vert\mathbf{h}_{i,k}^n \mathbf{w}_{i,k}^n\right\vert^2}{\beta_{i,k}^n} \geq g_1(\mathbf{w}_{i,k}^n, \beta_{i,k}^n),
\end{equation}
where $g_1(\mathbf{w}_{i,k}^n, \beta_{i,k}^n)$ is given in \eqref{eq:sca_gamma_1} on top of the next page,
\begin{figure*}[htbp]
  \begin{equation}\label{eq:sca_gamma_1}
    g_1(\mathbf{w}_{i,k}^n, \beta_{i,k}^n) \triangleq \frac{2\text{Re}\left\{\left(\mathbf{w}_{i,k}^n[\tau-1]\right)^H\mathbf{H}_{i,k}^n\mathbf{w}_{i,k}^n\right\}}{\beta_{i,k}^n[\tau-1]} - \frac{\left\vert\mathbf{h}_{i,k}^n \mathbf{w}_{i,k}^n[\tau-1]\right\vert^2}{(\beta_{i,k}^n[\tau-1])^2}\beta_{i,k}^n
  \end{equation}
  \hrulefill
  \begin{equation}\label{eq:sca_rho_1}
    g_2(\mathbf{w}_{j,k^\prime}^n) \triangleq 2\text{Re}\left\{\left(\mathbf{w}_{j,k^\prime}^n[\tau-1]\right)^H\mathbf{H}_{j,k}^n\mathbf{w}_{j,k^\prime}^n\right\} - \left\vert\mathbf{h}_{j,k}^n \mathbf{w}_{j,k^\prime}^n[\tau-1]\right\vert^2
    %\left(\mathbf{w}_{j,k^\prime}^n[\tau-1]\right)^H\mathbf{H}_{j,k}^n\mathbf{w}_{j,k^\prime}^n[\tau-1] \right) + \sigma_t^2
  \end{equation}
  \hrulefill
  \vspace{-1em}
\end{figure*}
$\mathbf{w}_{i,k}^n[\tau-1]$, $\beta_{i,k}^n[\tau-1]$ are the iterative optimization variables obtained in iteration $\tau-1$, and $\mathbf{H}_{i,k}^n\triangleq \left(\mathbf{h}_{i,k}^n\right)^H\mathbf{h}_{i,k}^n$. Note that the RHS of \eqref{eq:p11_interfer_low_bound} is also a convex function with respect to $\mathbf{w}$, and thus its lower bound can be approximated as
\begin{equation}\label{eq:sca_rho}
  \tilde{I}_{i,k}^n + I_{i,k}^n + \sigma_t^2 \geq \sum_{(j,k^\prime)\neq(i,k)}g_2(\mathbf{w}_{j,k^\prime}^n) + \sigma_t^2,
\end{equation}
where $g_2(\mathbf{w}_{j,k^\prime}^n)$ is given in \eqref{eq:sca_rho_1} on top of the next page.
The RHS of \eqref{eq:p11_bkhaul} is a concave function with respect to $\varphi_{i,k}^n$ and its upper bound can be obtained as
\begin{equation}\label{eq:sca_phi}
  \begin{aligned}
    % x_{ik}\log_2\left(1+\varphi_{i,k}^n\right)&\leq\log_2\left(1+\varphi_{i,k}^n\right)\\
    &\log_2\left(1+\varphi_{i,k}^n\right)\leq g_3(\varphi_{i,k}^n)\\
    &\triangleq \log_2\left(1+\varphi_{i,k}^n[\tau-1]\right)+ \frac{\varphi_{i,k}^n-\varphi_{i,k}^n[\tau-1]}{\ln2(1+\varphi_{i,k}^n[\tau-1])}.
  \end{aligned}
\end{equation}
The LHS of \eqref{eq:p11_bkhaul} can be expressed as 
the difference of concave functions, and its lower bound is as follows:
\begin{equation}\label{eq:sca_bkhaul}
  \begin{aligned}
    C_{\ell,i} &\geq C_{\ell,i}(\mathbf{p}) \\
    & \triangleq B_{Ka}t_{\ell,i}\Bigg\{\log_2\left(\sum_{\ell^\prime}p_{\ell^\prime}G_T(\theta_{\ell^\prime,i})G_R\vert f_i\vert^2+\sigma_s^2\right) \\
    & - C_{\ell,i}^1 -  C_{\ell,i}^2\Bigg\},
  \end{aligned}
\end{equation}
where %$C_{\ell,i}(\mathbf{p}[\tau-1])$ is given in \eqref{eq:sca_bkhaul_1} on top of the next page, and 
$C_{\ell,i}^1$ and $C_{\ell,i}^2$ are given by
\begin{equation}
  \begin{aligned}
    C_{\ell,i}^1 & = \log_2\left(\sum_{\ell^\prime\neq \ell}p_{\ell^\prime}[\tau-1]G_T(\theta_{\ell^\prime,i})G_R\vert f_i\vert^2+\sigma_s^2\right), \\
    C_{\ell,i}^2 & = \sum_{\ell^\prime\neq \ell}\frac{G_T(\theta_{\ell^\prime,i})G_R\vert f_i\vert^2}{\ln2(\sum_{\ell^\prime\neq \ell}p_{\ell^\prime}[\tau-1]G_T(\theta_{\ell^\prime,i})G_R\vert f_i\vert^2+\sigma_s^2)} \\
    & \times \left(p_{\ell^\prime}-p_{\ell^\prime}[\tau-1]\right).\nonumber
  \end{aligned}
\end{equation}
% \begin{figure*}[htbp]
%   \begin{equation}\label{eq:sca_bkhaul_1}
%     C_{\ell,i}(\mathbf{p}[\tau-1])  \triangleq B_{Ka}t_{\ell,i}\left\{\log_2\left(\sum_{\ell^\prime}p_{\ell^\prime}G_T(\theta_{\ell^\prime,i})G_R\vert f_i\vert^2+\sigma_s^2\right) 
%      - C_{\ell,i}^1(\mathbf{p}[\tau-1]) -  C_{\ell,i}^2(\mathbf{p}[\tau-1])\right\}
%      %\log_2\left(\sum_{\ell^\prime\neq \ell}p_{\ell^\prime}[\tau-1]G_T(\theta_{\ell^\prime,i})G_R\vert f_i\vert^2+\sigma_s^2\right) - 
%      %\sum_{\ell^\prime\neq \ell}\frac{G_T(\theta_{\ell^\prime,i})G_R\vert f_i\vert^2}{p_{\ell^\prime}[\tau-1]G_T(\theta_{\ell^\prime,i})G_R\vert f_i\vert^2+\sigma_s^2}(p_{\ell^\prime}-p_{\ell^\prime}[\tau-1])\}
%   \end{equation}
%   \hrulefill
% \end{figure*}

With approximations \eqref{eq:sca_gamma}, \eqref{eq:sca_rho}, \eqref{eq:sca_phi} and \eqref{eq:sca_bkhaul}, we can obtain the reformulated optimization problem at iteration $\tau$ of SCA as follows:
\begin{subequations}\label{eq:p12}
  \begin{align}
    \mathbf{P1.2:} & \max_{\mathbf{p,t,\tilde{w}}} \sum_{i\in\mathcal{B}}\sum_{k\in\mathcal{U}} x_{i,k}b_{i,k}r_{i,k} \label{eq:p12_obj} \\
    & \text{s.t. }\eqref{eq:p11_p1_cons}, \eqref{eq:p11_interfer_up_bound}, \eqref{eq:p11_sinr_up_bound}, \eqref{eq:p11_mbc}, \eqref{eq:p11_rate}, \nonumber\\
    & \quad\ \
    \Gamma_{i,k}^n \leq g_1(\mathbf{w}_{i,k}^n, \beta_{i,k}^n), \label{eq:p12_sinr_low_bound}\\
    & \quad\ \
    \rho_{i,k}^n \leq \sum_{(j,k^\prime)\neq(i,k)}g_2(\mathbf{w}_{j,k^\prime}^n)+\sigma_t^2, \label{eq:p12_interfer_low_bound}\\
    & \quad\ \
    \sum_{n\in\mathcal{O}}\sum_{k\in\mathcal{U}} B_Cg_3(\varphi_{i,k}^n) \leq C_{\ell,i}(\mathbf{p}). \label{eq:p12_bkhaul}
  \end{align}
\end{subequations}
For notational simplicity, we denote variables $\mathbf{\Gamma}$, $\boldsymbol{\beta,\varphi}$, $\boldsymbol{\rho}$, and $\mathbf{w}$ as $\mathbf{\tilde{w}}=\{\mathbf{w,\Gamma}, \boldsymbol{\beta,\varphi,\rho}\}$, because those variables all depend on $\mathbf{w}$.
The problem \eqref{eq:p12} is still not convex because variables $\mathbf{p,t}$ and $\boldsymbol{\varphi}$ are coupled due to the constraint \eqref{eq:p12_bkhaul}. It can be observed that \eqref{eq:p12} is convex with respect to either $\{\mathbf{p}, \mathbf{\tilde{w}}\}$ or $\{\mathbf{t}, \mathbf{\tilde{w}}\}$.
Based on this observation, the BSM method is adopted to solve \eqref{eq:p12}. Specifically, $\{\mathbf{p,t}, \mathbf{\tilde{w}}\}$ can be updated in the following manner: given $\{\mathbf{t,x}\}$, optimize $\{\mathbf{p}, \mathbf{\tilde{w}}\}$; and given $\{\mathbf{p,x}\}$, optimize $\{\mathbf{t}, \mathbf{\tilde{w}}\}$. 
% \begin{itemize} % modify to step figure
%   \item Given $\{\mathbf{t,x}\}$, optimize $\{\mathbf{p}, \mathbf{\tilde{w}}\}$.
%   \item Given $\{\mathbf{p,x}\}$, optimize $\{\mathbf{t}, \mathbf{\tilde{w}}\}$.
%   % \item Given $\{\mathbf{p,t}, \mathbf{\tilde{w}}\}$, optimize $\mathbf{x}$.
% \end{itemize}
% Since \eqref{eq:p12} is convex with respect to $\{\mathbf{p}, \mathbf{\tilde{w}}\}$ or $\{\mathbf{t,\mathbf{\tilde{w}}}\}$ with given $\mathbf{x}$, both $\{\mathbf{p}, \mathbf{\tilde{w}}\}$ and $\{\mathbf{t,\mathbf{\tilde{w}}}\}$ can be easily optimized.

\subsection{User Association Optimization for WSRM}\label{subsec:cen_ua}
% In this subsection, we show how to optimize $\mathbf{x}$ with given $\{\mathbf{p,t}, \mathbf{\tilde{w}}\}$. 
When variables $\{\mathbf{p,t}, \mathbf{\tilde{w}}\}$ are fixed, \eqref{eq:p12} is an integer programming problem with respect to $\mathbf{x}$, which is challenging to tackle. Therefore, we first relax the binary variable constraint \eqref{eq:p1_binary} into $0\leq x_{i,k}\leq 1$. Then, to obtain a high-quality solution for the original integer programming problem, a penalty term $\varrho(x_{i,k}^2-x_{i,k})$ is introduced into the objective \eqref{eq:p12_obj} to force $x_{i,k}$ to be binary, where $\varrho$ is a positive penalty parameter. The reformulated problem is written as follows:
\begin{subequations}\label{eq:p13}
  \begin{align}
    \mathbf{P1.3:} & \max_{\mathbf{x}} \sum_{i\in\mathcal{B}}\sum_{k\in\mathcal{U}} x_{i,k}b_{i,k}R_{i,k} + \varrho\sum_{i\in\mathcal{B}}\sum_{k\in\mathcal{U}} (x_{i,k}^2-x_{i,k}) \\
    & \text{s.t. }\eqref{eq:p1_ua}, \eqref{eq:p11_mbc},\eqref{eq:p12_bkhaul},\label{eq:p13_p1}\\
    & \quad\ \
    0 \leq x_{i,k} \leq 1.  \label{eq:p13_relax_x}
  \end{align}
\end{subequations}
Considering the penalty term is a convex function of $x_{i,k}$, we approximate it with the SCA method, which is given by 
\begin{equation}\label{eq:obj_pen_sca}
  \begin{aligned}
    (x_{i,k}^2-x_{i,k}) & \geq f(x_{i,k}) \\
    & \triangleq (2x_{i,k}[\tau-1]-1)x_{i,k}\\
    & - x^2_{i,k}[\tau-1],
  \end{aligned}
\end{equation}
where $x_{i,k}[\tau-1]$ is the iterative optimization variable obtained in iteration $\tau-1$.
With \eqref{eq:obj_pen_sca}, we can write the problem at iteration $\tau$ of the SCA as
\begin{equation}\label{eq:p14}
  \begin{aligned}
    \mathbf{P1.4:} & \max_{\mathbf{x}} \sum_{i\in\mathcal{B}}\sum_{k\in\mathcal{U}} x_{i,k}b_{i,k}R_{i,k} + \varrho\sum_{i\in\mathcal{B}}\sum_{k\in\mathcal{U}} f(x_{i,k}) \\
    & \text{s.t. }\eqref{eq:p13_p1}, \eqref{eq:p13_relax_x}. 
  \end{aligned}
\end{equation}
\textbf{P1.4} is a linear programming problem with respect to $\mathbf{x}$ and can be easily solved. 
% Similar to \cite{sanjabi2014optimal}, the gradient projection method is exploited to update $\mathbf{x}$. The reason is that if we solve $\mathbf{x}$ directly to the global optimality, $\mathbf{x}$ may fix at $0$ or $1$ and not change during the algorithm. The gradient projection method is given by
% \begin{equation}\label{eq:gradient_projection_x}
%     (x_{i,k}) = P_{\Omega_{x}}\left(x_{i,k}+\eta b_{i,k}r_{i,k}\right),
% \end{equation}
% where $\eta$ is the step length, and $P_{\Omega_{x}}(\cdot)$ is the projection to the set $\Omega_{x}=\{x_{i,k}\mid\eqref{eq:p1_ua}, \eqref{eq:p11_mbc}, \eqref{eq:p12_bkhaul}, \eqref{eq:p13_relax_x}\}$. 
Besides, it can be verified that, when $\varrho\rightarrow\infty$, the solution $\mathbf{x}$ of \textbf{P1.4} will satisfy the binary constraint \eqref{eq:p1_binary} \cite{khamidehi2016joint}. However, if $\varrho$ is too large, the objective of \eqref{eq:p14} will be dominated by the penalty term, and the original objective (i.e., WSR) will be diminished. To avoid this, we first initialize $\varrho$ to a small value to find a good starting point and then gradually increase $\varrho$ as: $\varrho = q\varrho$, where $q>1$.
The overall algorithm is summarized in Algorithm \ref{Alg1}.
\begin{algorithm}[hthp]
  \caption{Centralized Optimization Algorithm for WSRM\label{Alg1}}
  {
  \begin{algorithmic}[1]
    \STATE Initialize $\mathbf{p}[0], \mathbf{t}[0], \mathbf{\tilde{w}}[0], \mathbf{x}[0]$ to feasible values.
    \REPEAT
    \REPEAT
    \STATE Update $\mathbf{p}[\tau], \mathbf{\tilde{w}}[\tau]$ by solving \eqref{eq:p12} with given $\mathbf{t}[\tau-1]$ and $\mathbf{x}[\tau-1]$.
    \STATE Update $\mathbf{t}[\tau], \mathbf{\tilde{w}}[\tau]$ by solving \eqref{eq:p12} with given $\mathbf{p}[\tau]$ and $\mathbf{x}[\tau-1]$.
    \STATE Update $\mathbf{x}[\tau]$ by solving \eqref{eq:p14} with given $\mathbf{p}[\tau], \mathbf{t}[\tau]$ and $\mathbf{\tilde{w}}[\tau]$.
    \UNTIL The objective converges or the maximum number of iterations is reached.
    \STATE $\varrho = q\varrho$.
    \UNTIL $f(x_{i,k})$ is below a certain threshold $\epsilon$ or the maximum number of iterations is reached.
  \end{algorithmic}}
\end{algorithm}
% \vspace{-1.5em}

\subsection{Obtaining Revenue without Inter-operator Sharing}\label{subsec:Uz0}
% Because $U_z^0$ is the revenue that operator $z$ obtains without inter-operator sharing, 
$U_z^0$ can be determined by solving the WSRM problem without inter-operator sharing, which can be formulated as follows
\begin{subequations}\label{eq:p2}
  \begin{align}
    \mathbf{P2:} & \max_{\mathbf{p,t,W,x}} \sum_{i\in\mathcal{B}_z}\sum_{k\in\mathcal{U}_z} x_{i,k}b_{i,k}R_{i,k} \label{eq:p2_obj} \\ \nonumber
    & \text{s.t. }
    \eqref{eq:p1_ta}, \eqref{eq:p1_bkhaul}, \eqref{eq:p1_sat_power}, \eqref{eq:p1_ua},  \\%\sum_{m,i} x_{mi,nk} \leq 1, \label{eq:p1_ua}
    & \quad\ \
    \sum_{k\in\mathcal{U}_z} \parallel\mathbf{w}_{i,k}^z\parallel^2 \leq P_i, \forall i \in \mathcal{B}_z, \label{eq:p2_st_power} \\
    & \quad\ \
    x_{i,k} = \begin{cases}
               0 \text{ or } 1, & \mbox{if } i\in\mathcal{B}_z, k\in\mathcal{U}_z, \\
               0, & \mbox{otherwise},
             \end{cases} \label{eq:p2_x}
    % & \quad\ \
  \end{align}
\end{subequations}
where \eqref{eq:p2_st_power} indicates there is no spectrum sharing, and \eqref{eq:p2_x} indicates that each user can only be associated with the BSs/STs of the same operator. Further, $R_{i,k}$ in \eqref{eq:p2_obj} is given by
\begin{equation}\label{eq:indv_rate}
  R_{i,k} = \log_2(1 + \gamma_{i,k}^{n}),
\end{equation}
where $\gamma_{i,k}^n$ can be obtained by \eqref{eq:sha_sinr} with $I_{i,k}^n=0$.
% By solving \textbf{P2}, we can obtain $U_z^0$.%, which is the revenue of operator $z$ can obtain without spectrum and service sharing.
\textbf{P2} can be decomposed into independent subproblems for different operators. 
Similar to \textbf{P1}, the subproblem for operator $n$ can be equivalently reformulated as
\begin{equation}\label{eq:p21}
  \begin{aligned}
    \mathbf{P2.1:} & \max_{\mathbf{p,t,\tilde{w},x}} \sum_{i\in\mathcal{B}_z}\sum_{k\in\mathcal{U}_z} x_{i,k}b_{i,k}r_{i,k} \\
    & \text{s.t. }\eqref{eq:p1_ta}, \eqref{eq:p1_sat_power}, \eqref{eq:p1_ua}, \eqref{eq:p2_st_power},\eqref{eq:p2_x},\\
    & \quad\ \
    \eqref{eq:p12_sinr_low_bound}-\eqref{eq:p12_bkhaul}.
    % \sum_{m,i} x_{mi,mk} \leq 1, \label{eq:p11_b}\\
    % & \quad\ \
    % \sum_{m,k} \parallel\mathbf{w}_{mi,mk}^m\parallel^2 \leq P_t, \label{eq:p11_d} \\
    % & \quad\ \
    % \Gamma_{mi,mk}^m \leq g(\mathbf{w}_{mi,mk}^m[\tau-1], \beta_{mi,mk}^m[\tau-1]), \label{eq:relax_sinr}\\
    % & \quad\ \
    % \beta_{mi,mk}^m \geq \tilde{I}_{mi,mk}^m + \sigma^2, \label{relax_interfer}\\
    % & \quad\ \
    % r_{mi,mk}=\log_2(1+\Gamma_{mi,mk}^m), \label{eq:p12_d}\\
    % & \quad\ \
    % 0\leq x_{mi,mk}\leq1.  \label{eq:p12_c}
  \end{aligned}
\end{equation}
It is clear that \textbf{P2.1} can be regarded as a special case of \textbf{P1.2}.
Therefore,the algorithm proposed for \textbf{P1.2} can be applied to solving \textbf{P2.1}. 
Different from \textbf{P1.2}, MBCs are not considered in \textbf{P2.1}, and there exists optimal $\mathbf{x}$ of \textbf{P2.1} are binary variables. Hence, we do not need to introduce the penalty term \eqref{eq:obj_pen_sca} to the objective of \eqref{eq:p21}. 
% Note that with given $\{\mathbf{p,t,\tilde{w}}\}$, \textbf{P2.1} is linear in $\mathbf{x}$.
If we solve $\mathbf{x}$ directly to the global optimality, $\mathbf{x}$ will not change during the algorithm and fix at $0$ or $1$. To find a good $\mathbf{x}$, we exploit the gradient projection \cite{sanjabi2014optimal} for updating $\mathbf{x}$, which is given by
%with given  $\{\mathbf{w,\Gamma}, \boldsymbol{\beta}\}$, \textbf{P2.2} is linear in $\mathbf{x}$, and hence there exists a solution on the vertices of the constraint set. 
% Besides, the gradient projection can be exploited for updating $\mathbf{x}$, which is given by
\begin{equation}\label{eq:gradient_projection_x1}
  x_{i,k} = P_{\Omega_x}\left(x_{i,k}+\eta b_{i,k}R_{i,k}\right),
\end{equation}
where 
$P_{\Omega_x}(\cdot)$ is the projection to the set $\Omega_x=\{x_{i,k}\mid\eqref{eq:p1_ua},\eqref{eq:p2_x}\}$, and $\eta$ is the step size. 
The overall algorithm is summarized in Algorithm \ref{Alg3}.
\begin{algorithm}[hthp]
  \caption{Algorithm for Obtaining Revenue without Inter-operator Sharing\label{Alg3}}
  {
  \begin{algorithmic}[1]
  \STATE Initialize $\mathbf{p}[0], \mathbf{t}[0], \mathbf{\tilde{w}}[0], \mathbf{x}[0]$ to feasible values.
  \REPEAT
  \STATE Update $\mathbf{p}[\tau], \mathbf{\tilde{w}}[\tau]$ by solving \eqref{eq:p21} with given $\mathbf{t}[\tau-1]$ and $\mathbf{x}[\tau-1]$.
    \STATE Update $\mathbf{t}[\tau], \mathbf{\tilde{w}}[\tau]$ by solving \eqref{eq:p21} with given $\mathbf{p}[\tau]$ and $\mathbf{x}[\tau-1]$.
  \STATE Update $\mathbf{x}$ with the gradient projection method \eqref{eq:gradient_projection_x1} with given $\mathbf{p}[\tau], \mathbf{t}[\tau]$ and $\mathbf{\tilde{w}}[\tau]$.
  \UNTIL The objective converges or the maximum number of iterations is reached.
  \end{algorithmic}}
\end{algorithm}
% \vspace{-1.5em}

\subsection{Finding Initial Points}
The feasible initial points are important for the SCA-based algorithm. However, due to the backhaul constraint \eqref{eq:p12_bkhaul} and MBC \eqref{eq:p11_mbc}, it is challenging to find feasible initial points for \textbf{P1.2}. To address this issue, we propose a penalty method to find the feasible initial points \cite{tervo2018energy}. 
Specifically, we can obtain initial points by solving the following optimization problem
\begin{subequations}\label{eq:p3}
  \begin{align}
    \mathbf{P3:} & \max_{\mathbf{p,t,\tilde{w},x,s}}%\max_{\substack{\mathbf{p,t,w,x,\Gamma},\\\boldsymbol{\beta,\varphi,\rho},\mathbf{s}}} 
    \sum_{i\in\mathcal{B}}\sum_{k\in\mathcal{U}} x_{i,k}b_{i,k}r_{i,k} - \xi (\sum_{z\in\mathcal{O}} s_{u_z} + \sum_{i\in\mathcal{B}_S} s_{b_i}) \label{eq:p3_obj} \\
    & \text{s.t. }\eqref{eq:p11_p1_cons}, \eqref{eq:p11_interfer_up_bound}, \eqref{eq:p11_sinr_up_bound}, \eqref{eq:p11_rate}, \eqref{eq:p12_sinr_low_bound}\nonumber\\
    & \quad\ \
    \sum_{i\in\mathcal{B}}\sum_{k\in\mathcal{U}} x_{i,k}\alpha_{i,k}^z r_{i,k} + s_{u_z} \geq U_z^0, \label{eq:p3_mbc}\\
    & \quad\ \
    \sum_{n\in\mathcal{O}}\sum_{k\in\mathcal{U}} B_Cg_3(\varphi_{i,k}^n) \leq C_{\ell,i}(\mathbf{p}) + s_{b_i}, \label{eq:p3_bkhaul}\\
    & \quad \ \
    s_{u_z}, s_{b_i} \geq 0, \label{eq:p3_pen_posit}
  \end{align}
\end{subequations}
where $\xi$ is a positive penalty parameter and $\mathbf{s}\triangleq\{s_{u_z},s_{b_i}\}_{z\in\mathcal{O},i\in\mathcal{B}_S}$ are slack variables. The basic idea of the above formulation is forcing the penalty terms to be zero by maximizing \eqref{eq:p3_obj}. \textbf{P3} can also be solved with the BSM method. Besides, the gradient projection can be exploited for updating $\mathbf{x}$ and $\mathbf{s}$, which is given by
% Note that with given $\{\mathbf{p,t,\tilde{w}}\}$, \textbf{P3} is linear in $\mathbf{x}$ and $\mathbf{s}$.
% If we solve $\mathbf{x}$ directly to the global optimality, $\mathbf{x}$ will not change during the algorithm and fix at $0$ or $1$. To find a good feasible $\mathbf{x}$, we exploit the gradient projection \cite{sanjabi2014optimal} for updating $\mathbf{x}$ and $\mathbf{s}$, which is given by
\begin{equation}\label{eq:gradient_projection_x2}
    (x_{i,k},s_{u_z}) = P_{\Omega_{xs}}\left(x_{i,k}+\eta b_{i,k}R_{i,k},s_{u_z}-\eta\xi\right),
\end{equation}
where $P_{\Omega_{xs}}(\cdot)$ is the projection to the set $\Omega_{xs}=\{x_{i,k},s_{u_z}\mid\eqref{eq:p1_ua}, \eqref{eq:p11_mbc}, \eqref{eq:p13_relax_x}\}$. 
The overall algorithm is summarized in Algorithm \ref{Alg2}. 
% Specifically, $\mathbf{p}[0]$ can be initialized by allocating power equally to beams, $\mathbf{t}[0]$ can be initialized by allocating time equally to STs within a beam, $\mathbf{w}$ can be initialized by maximum ratio transmission scheme with equal power, and $\mathbf{x}$ can be initialized to $\frac{1}{N_U+N_U^\prime}$;
% It is worth noting that the MBC cannot be satisfied if the solution $\mathbf{x}$ obtained by Algorithm \ref{Alg1} is not binary or the solution $\mathbf{s}$ obtained by Algorithm \ref{Alg2} are positive, which means that the operators may refuse to sharing its spectrum or services to avoid the revenue loss.
\begin{algorithm}[hthp]
  \caption{Initial Points Searching Algorithm\label{Alg2}}
  {
  \begin{algorithmic}[1]
  \STATE Initialize $\mathbf{p}[0], \mathbf{t}[0], \mathbf{\tilde{w}}[0], \mathbf{x}[0]$ to feasible values.
  \REPEAT
  \STATE Update $\mathbf{p}[\tau], \mathbf{\tilde{w}}[\tau], \mathbf{s}$ by solving \eqref{eq:p3} with given $\mathbf{t}[\tau-1]$ and $\mathbf{x}$.
  \STATE Update $\mathbf{t}[\tau], \mathbf{\tilde{w}}[\tau], \mathbf{s}$ by solving \eqref{eq:p3} with given $\mathbf{p}[\tau-1]$ and $\mathbf{x}$.
  \STATE Update $\mathbf{x,s}$ using gradient projection \eqref{eq:gradient_projection_x2} with given $\mathbf{p}[\tau], \mathbf{t}[\tau]$ and $\mathbf{\tilde{w}}[\tau]$.
  \UNTIL $\Vert\mathbf{s}\Vert_{\infty}$ is below a certain threshold $\epsilon$ or the maximum number of iterations is reached.
  \end{algorithmic}}
\end{algorithm}
\vspace{-1em}

\subsection{Convergence and Complexity Analysis}
Algorithms proposed in this section have similar main process. As a result, we mainly analyze the convergence and complexity of Algorithm \ref{Alg1}, and the results can be applied to Algorithm \ref{Alg3} and \ref{Alg2}.
The convergence of Algorithm \ref{Alg1} is provided as follows.
\begin{prop}\label{prop1}
  Algorithm \ref{Alg1} is guaranteed to converge
\end{prop}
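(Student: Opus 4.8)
The plan is to prove convergence by showing that the sequence of objective values generated by Algorithm \ref{Alg1} is monotonically non-decreasing and bounded above, and then invoking the monotone convergence theorem. Because Algorithm \ref{Alg1} has a nested structure, I would treat the inner loop (the BSM sweep over the three blocks $\{\mathbf{p},\mathbf{\tilde{w}}\}$, $\{\mathbf{t},\mathbf{\tilde{w}}\}$, and $\mathbf{x}$ for a fixed penalty weight $\varrho$) and the outer penalty loop separately.

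For the inner loop, the key observation is that every SCA surrogate used to build \textbf{P1.2} and \textbf{P1.4} is a first-order Taylor approximation that is \emph{tight} at the current expansion point $[\tau-1]$. Concretely, the right-hand sides of \eqref{eq:sca_gamma}, \eqref{eq:sca_rho}, and \eqref{eq:sca_bkhaul} are valid lower bounds of the corresponding quantities, while \eqref{eq:sca_phi} is a valid upper bound, and each coincides in both value and gradient with the exact expression at $[\tau-1]$. Consequently, the iterate from iteration $\tau-1$ is always feasible for the surrogate problem solved at iteration $\tau$: at the expansion point, the surrogate constraints \eqref{eq:p12_sinr_low_bound}--\eqref{eq:p12_bkhaul} reduce exactly to the original constraints \eqref{eq:p11_sinr_low_bound}, \eqref{eq:p11_interfer_low_bound}, and \eqref{eq:p11_bkhaul}. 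Hence solving \textbf{P1.2} (and subsequently \textbf{P1.4}) to optimality can only increase, or leave unchanged, the surrogate objective, and tightness transfers this improvement to the true objective \eqref{eq:p11_obj}. Since each of the three block updates is itself an exact maximization of a convex subproblem with the remaining blocks fixed, the objective is non-decreasing across one BSM sweep. This is exactly the block successive maximization setting with tight minorant surrogates, whose convergence is established in \cite{razaviyayn2013unified}, so I would verify the regularity conditions (tight minorant, matching gradients) and then appeal to that result.

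Boundedness is comparatively straightforward: under the per-node power budgets \eqref{eq:p1_bs_power}, the satellite power budget \eqref{eq:p1_sat_power}, and the nonzero noise floors $\sigma_t^2$ and $\sigma_s^2$, every SINR $\gamma_{i,k}^n$ is bounded, hence every rate $R_{i,k}$ is bounded; combined with $x_{i,k}\in[0,1]$ and the association constraint \eqref{eq:p1_ua}, the objective \eqref{eq:p1_obj} is bounded above. A bounded, monotone sequence converges, which settles the inner loop for a fixed $\varrho$. For the outer loop, I would argue that as $\varrho$ is geometrically increased ($\varrho = q\varrho$, $q>1$), the penalized surrogate \eqref{eq:obj_pen_sca} forces $x_{i,k}^2-x_{i,k}\to 0$, so the stopping criterion $f(x_{i,k})<\epsilon$ is eventually met and the recovered $\mathbf{x}$ satisfies the binary constraint \eqref{eq:p1_binary}; this penalization argument follows \cite{khamidehi2016joint}.

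The main obstacle I anticipate is the careful bookkeeping needed to show that the previous iterate remains feasible for \emph{every} surrogate subproblem across the BSM sweep, given that the approximations act in opposite directions—the SINR and interference constraints \eqref{eq:p12_sinr_low_bound}--\eqref{eq:p12_interfer_low_bound} are tightened by lower bounds, while the backhaul constraint \eqref{eq:p12_bkhaul} is tightened by the upper bound \eqref{eq:sca_phi} together with the lower bound \eqref{eq:sca_bkhaul}—and that the shared block $\mathbf{\tilde{w}}$ is re-optimized in both the $\mathbf{p}$ step and the $\mathbf{t}$ step. Establishing tightness and feasibility jointly at the expansion point, rather than for each constraint in isolation, is the delicate step that makes the monotonicity argument go through.
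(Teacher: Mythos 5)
Your proposal is correct and takes essentially the same route as the paper's proof: monotonicity of the objective follows from the SCA surrogates being tight minorants at the expansion point (so the previous iterate remains feasible and the surrogate improvement transfers to the true objective via the lower-bound property), and boundedness follows from the power budgets, giving convergence by monotone convergence. Your explicit treatment of the outer penalty loop and the appeal to the BSM framework of \cite{razaviyayn2013unified} are refinements the paper leaves implicit, but the core argument is identical to the inequality chain in the paper's appendix.
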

\begin{proof}%\renewcommand{\qedsymbol}{}
  Please refer to Appendix \ref{app:proof1}
\end{proof}
% Step $3$ to $5$ are the main updating steps of Algorithm \ref{Alg1}, and each of them solves a convex problem and obtains the optimal solution.
% Let $F(\mathbf{p}[\tau-1],\mathbf{t}[\tau-1], \mathbf{x}[\tau-1], \tilde{\mathbf{w}}[\tau-1])$ denote the original objective, and $F_{\mathbf{p}[\tau-1]}(\mathbf{p}[\tau-1],\mathbf{t}[\tau-1], \mathbf{x}[\tau-1], \tilde{\mathbf{w}}[\tau-1])$ denote the approximated objective with the SCA method at $\mathbf{p}[\tau-1]$. It can be seen that $F(\mathbf{p}[\tau-1],\mathbf{t}[\tau-1], \mathbf{x}[\tau-1], \tilde{\mathbf{w}}[\tau-1]) = F_{\mathbf{p}[\tau-1]}(\mathbf{p}[\tau-1],\mathbf{t}[\tau-1], \mathbf{x}[\tau-1], \tilde{\mathbf{w}}[\tau-1])$, because we adopt the first-order taylor expression. 
% For notational simplicity, we use $F(\tau-1)$ and $F_{\mathbf{p}}(\tau-1)$ 
% Step $3$ to $5$ of Algorithm \ref{Alg1}, and each of them solves a convex problem and obtains the optimal solution.
% $F(\mathbf{p,t}, \mathbf{x}, \tilde{\mathbf{w}}) = \sum_{i\in\mathcal{B}}\sum_{k\in\mathcal{U}}x_{i,k}b_{i,k}R_{i,k}$, and  $F_{SCA}(\mathbf{p,t}, \mathbf{x}, \tilde{\mathbf{w}}) = \sum_{i\in\mathcal{B}}\sum_{k\in\mathcal{U}}x_{i,k}b_{i,k}r_{i,k}$
The main complexity of Algorithm \ref{Alg1} is caused by solving \eqref{eq:p12} and \eqref{eq:p14} in the inner loop. The worst-case complexity of Algorithm \ref{Alg1} depends on the number of variables, which can be upper bound as
\begin{equation}
  \begin{aligned}
    \mathcal{O}\large\{I_{out}^1I_{in}^1\large(& \left(N_L + N_t(N_B+N_B^\prime)(N_U+N_U^\prime)\right)^4 \\
    &+ \left(N_B^\prime + N_t(N_B+N_B^\prime)(N_U+N_U^\prime)\right)^4 \\
    &+ \left((N_B+N_B^\prime)(N_U+N_U^\prime)\right)^4 \large)\large\}\nonumber
  \end{aligned}
\end{equation}
where $I_{out}^1$ and $I_{in}^1$ are the number of outer and inner iterations, respectively.
% $I_{out}^1I_{in}^1\left(\left(N_L + 2N_t(N_B+N_B^\prime) + 8(N_B+N_B^\prime)(N_U+N_U^\prime)\right)^4 + \left(N_B^\prime + 2N_t(N_B+N_B^\prime) + 8(N_B+N_B^\prime)(N_U+N_U^\prime)\right)^4 + \left((N_B+N_B^\prime)(N_U+N_U^\prime)\right)^4 \right)$

\section{Distributed Algorithms for WSRM}\label{sec:dis_alg}
The centralized algorithm proposed in Section \ref{sec:cen_alg} is challenging to implement in the real SAGIN.
% due to practical problems, like the huge burden of processing all information in a global center and the privacy issue. 
In this section, we propose a distributed algorithm for the WSRM problem, which can be implemented in local centers independently. Specifically, we develop a distributed user association algorithm based on the Lagrangian dual decomposition and a distributed algorithm for beamforming design and resource allocation based on the consensus ADMM.

\subsection{Lagrangian Dual Decomposition-based User Association Design}
With fixed $\{\mathbf{p}, \mathbf{t}, \mathbf{\tilde{w}}\}$, \textbf{P1.1} can be reformulated as 
\begin{subequations}\label{eq:p15}
  \begin{align}
    \mathbf{P1.5:} & \max_{\mathbf{x}} \sum_{i\in\mathcal{B}}\sum_{k\in\mathcal{U}} x_{i,k}b_{i,k}R_{i,k} \label{eq:p14_obj} \\
    & \text{s.t. }\eqref{eq:mbc}, \eqref{eq:p1_ua}, \eqref{eq:p1_binary},%\eqref{eq:p1_bkhaul}, 
  \end{align}
\end{subequations}
It is clear that \textbf{P1.5} is non-convex due to the binary variables $\mathbf{x}$. Although the centralized algorithm can tackle this problem efficiently by relaxing binary variables, it is difficult to implement in the real SAGIN due to high complexity. 
% The centralized algorithm proposed in the subsection \ref{subsec:cen_ua} relaxes the binary variables to make the problem more tractable, and a penalty is introduced to obtain high-quality integer results finally.
% Although the centralized algorithm can solve this problem effectively, it is nearly impossible to implement in the real SAGIN due to its high complexity. 
Keeping that in mind, we propose a distributed user association algorithm based on the Lagrangian dual decomposition \cite{ye2013ua,shen2014distributed}. 

First, the Lagrangian function with respect to constraint \eqref{eq:mbc} is 
\begin{equation}\label{eq:lag_fun}
  L(\mathbf{x}, \boldsymbol{\lambda}) = \sum_{i\in\mathcal{B}}\sum_{k\in\mathcal{U}} x_{i,k}b_{i,k}R_{i,k} + \sum_{z\in\mathcal{O}}\lambda_z\left(U_z - U_z^0\right),
  % \begin{aligned}
  %   L(\mathbf{x}, \boldsymbol{\lambda},\boldsymbol{\mu}) &= \sum_{i,k} x_{i,k}b_{i,k}R_{i,k} + \sum_{z\in\mathcal{O}}\lambda_z\left(U_z - U_z^0\right)\\
  %   &+ \sum_{i\in\mathcal{B}_S}\mu_i\left(C_i - \sum_k x_{i,k}R_{i,k}\right),
  % \end{aligned}
\end{equation}
where $\boldsymbol{\lambda}$ are dual variables introduced for \eqref{eq:mbc}. Then the dual function is give as
\begin{equation}
  g(\boldsymbol{\lambda}) = \begin{cases}
    & \max_{\mathbf{x}} L(\mathbf{x}, \boldsymbol{\lambda})\\
    & \text{s.t. }\eqref{eq:p1_ua}, \eqref{eq:p1_binary}
  \end{cases}.
\end{equation}
Based on the dual function, the Lagrange dual problem of \textbf{P1.5} can be written as
\begin{equation}\label{eq:dual_ua}
  \min_{\boldsymbol{\lambda}} g(\boldsymbol{\lambda}).
\end{equation}
Now, the solution to \textbf{P1.5} can be obtained by solving \eqref{eq:dual_ua}. Note that since \textbf{P1.5} is non-convex and discrete in nature, solving \eqref{eq:dual_ua} is not equivalent to solving \textbf{P1.5}, and there may exist a duality gap. Nevertheless, good primal solutions can often be obtained by solving the dual problem \cite{shen2014distributed}.

Compared with \textbf{P1.3}, \eqref{eq:dual_ua} can be decomposed into independent subproblems for each user and BS/ST and then solved in a distributed manner. Specifically, \eqref{eq:dual_ua} can be solved in two levels, i.e., the inner maximization problem at the user level and the outer minimization problem at the operator level. In the inner layer, $\mathbf{x}$ is optimized to maximize the Lagrangian function \eqref{eq:lag_fun} with given $\boldsymbol{\lambda}$. The maximization of the Lagrangian function \eqref{eq:lag_fun} has the following explicit analytic solution:
\begin{equation}\label{eq:dual_opt_ua}
  x_{ik}^* = \begin{cases} 1, & \text{if } i = \arg\max\limits_j (b_{j,k} + \sum_{z}\lambda_z\alpha_{j,k}^z)R_{j,k}\\
    0, &\text{otherwise}.
  \end{cases}
\end{equation}
We can see that optimal $x_{ik}^*$ can be determined by each user $k$ independently according to \eqref{eq:dual_opt_ua}.
On the other hand, the dual problem \eqref{eq:dual_ua} is convex concerning $\boldsymbol{\lambda}$ and thus can be solved with the subgradient method. With $x_{ik}^*$, the dual variables can be updated as
\begin{equation}\label{eq:dual_var_upd}
  \lambda_z = \left[\lambda_z - \eta\left(U_z - U_z^0\right)\right]^+,
  % \begin{align}
  %   \lambda_z &= \left[\lambda_z - \eta\left(U_z - U_z^0\right)\right]^+,\label{eq:dual_var_upd_mbc} \\
  %   \mu_i &= \left[\mu_i - \eta\left(C_i - \sum_k x_{ik}^x\right)\right]^+,\label{eq:dual_var_upd_bkhaul}
  % \end{align}
\end{equation}
where $x^+=\max\{x,0\}$. Specifically, $\boldsymbol{\lambda}$ can be updated by each operator independently. The above algorithm is compactly written in Algorithm \ref{Alg4}
\vspace{-1em}
\begin{algorithm}[hthp]
  \caption{Distributed Algorithm for User Association\label{Alg4}}
  {
  \begin{algorithmic}[1]
  \STATE Initialize $\mathbf{x}, \boldsymbol{\lambda}$ to feasible values.
  \REPEAT
  % \REPEAT
  \STATE Each user updates $\mathbf{x}$ according to \eqref{eq:dual_opt_ua}.
  % \STATE Each ST updates $\boldsymbol{\mu}$ according to \eqref{eq:dual_var_upd_bkhaul} and broadcasts $\boldsymbol{\mu}$ to all users.
  % \UNTIL The dual objective converges.
  \STATE Each operator updates $\boldsymbol{\lambda}$ according to \eqref{eq:dual_var_upd} and broadcasts $\boldsymbol{\lambda}$ to all users.
  \UNTIL The dual objective converges or the maximum number of iterations is reached.
  \end{algorithmic}}
\end{algorithm}
\vspace{-2em}

\subsection{Consensus ADMM-based Resourca Allocation and Beamforming Design}
% In Section \ref{subsec:cen_ptw}, an algorithm based on the SCA method is proposed to jointly optimize the beamforming of BSs/STs and resource allocation of the LEO satellite. 
Due to the inter-operator interference \eqref{eq:sha_inter_interfer} and the MBC \eqref{eq:mbc}, Algorithm \ref{Alg1} is required to be implemented in a centralized manner, which is challenging to realize in practice. To this end, we propose an algorithm based on the consensus ADMM \cite{boyd2011distributed} to tackle this problem distributedly.  
%\cite{tervo2017distributed}.

First, \textbf{P1.2} can be equivalently rewritten as follows:
\begin{subequations}\label{eq:p121}
  \begin{align}
    % \mathbf{P1.2.1:} & \max_{\substack{\mathbf{p,t,\bar{w}},\\ \boldsymbol{\digamma},\mathbf{\dot{\Gamma}},\boldsymbol{\theta,\dot{\theta},\psi,\dot{\psi}}}} 
    \mathbf{P1.2.1:} & \max_{\substack{\mathbf{p,t,\bar{w}},\\ \mathbf{\dot{\Gamma}},\boldsymbol{\dot{\theta},\dot{\psi}}}} \sum_{z\in\mathcal{O}}\sum_{i\in\mathcal{B}_z}\sum_{k\in\mathcal{U}} x_{i,k}b_{i,k}r_{i,k}^{(z)} \label{eq:p121_obj} \\
    & \text{s.t. }\eqref{eq:p11_p1_cons}, \eqref{eq:p11_sinr_up_bound}, \eqref{eq:p12_bkhaul}, \nonumber\\
    & \quad\ \
    r_{i,k}^{(z)} = \sum_{n\in\mathcal{O}}\log_2(1+\Gamma_{i,k}^{n,(z)}), \label{eq:p121_rate_local}\\
    & \quad\ \
    \sum_{i\in\mathcal{B}_z}\sum_{k\in\mathcal{U}} x_{i,k}\alpha_{i,k}^zr_{i,k}^{(z)} \geq U_z^0, \label{eq:p121_mbc}\\
    & \quad\ \
    \Gamma_{i,k}^{n,(z)}\! \leq\! g_1(\mathbf{w}_{i,k}^n,\! \beta_{i,k}^n), \label{eq:p121_local_gamma}\\
    & \quad\ \
    \beta_{i,k}^n \geq \tilde{I}_{i,k}^n + \sum_{m\in\mathcal{O}\setminus\{z\}}\theta_{m,k}^{n,(z)} + \sigma_t^2,  \label{eq:p121_beta}\\
    & \quad\ \
    \rho_{i,k}^n \leq \sum_{i\in\mathcal{B}_z}\sum_{k^\prime\in\mathcal{U}\setminus\{ k\}}\!g_2(\mathbf{w}_{i,k^\prime}^n) +\! \sum_{m\in\mathcal{O}\setminus\{z\}}\!\psi_{m,k}^{n,(z)}, \label{eq:p121_rho}\\%\forall i\in\mathcal{B}_z
    & \quad\ \
    \theta_{z,k}^{n,(z)} \geq \sum_{i\in\mathcal{B}_z}\sum_{k^\prime\in\mathcal{U}\setminus\{ k\}}\left\vert\mathbf{h}_{i,k}^n\mathbf{w}_{i,k^\prime}^n\right\vert ^2, \label{eq:p121_local_theta}\\%\forall i\in\mathcal{B}\setminus\mathcal{B}_z, 
    & \quad\ \
    \psi_{z,k}^{n,(z)} \leq \sum_{i\in\mathcal{B}_z}\sum_{k^\prime\in\mathcal{U}\setminus\{ k\}}g_2(\mathbf{w}_{i,k^\prime}^n), \label{eq:p121_local_psi}\\
    & \quad\ \
    \Gamma_{i,k}^{n,(z)} = \dot{\Gamma}_{i,k}^n, \label{eq:p121_global_gamma}\\%\forall z \in\mathcal{O},\forall i\in\mathcal{B},\forall k\in\mathcal{U},
    & \quad\ \
    \theta_{z,k}^{n,(m)} = \dot{\theta}_{z,k}^n, \label{eq:p121_global_theta}\\%\forall m \in\mathcal{O},\forall i\in\mathcal{B},\forall k\in\mathcal{U}, 
    & \quad\ \
    \psi_{z,k}^{n,(m)} = \dot{\psi}_{z,k}^n,  \label{eq:p121_global_psi}%\forall m \in\!\mathcal{O},\forall i\in\!\mathcal{B},\forall k\in\mathcal{U},
  \end{align}
\end{subequations}
where $\mathbf{\bar{w}}\triangleq\{\mathbf{\bar{w}}_z\}_{z\in\mathcal{O}}$, and $\mathbf{\bar{w}}_z\triangleq\{\mathbf{w}_{i,k}^n, \beta_{i,k}^n,\varphi_{i,k}^n,\rho_{i,k}^n,\Gamma_{i,k}^{n,(z)},\theta_{i,k}^{n, (z)},\psi_{i,k}^{n, (z)}\}_{n\in\mathcal{O},i\in\mathcal{B}_z,k\in\mathcal{U}}$ represents the local variables of operator $z$. $\Gamma_{i,k}^{n,(z)}$, $\theta_{i,k}^{n, (z)}$ and $\psi_{i,k}^{n, (z)}$ are newly introduced slack variables.
% Similarly, we define $\boldsymbol{\theta}_z$ and $\boldsymbol{\psi}_z$.
% $\boldsymbol{\theta}_z\triangleq\{\theta_{i,k}^{n, (z)}\}$, and $\boldsymbol{\psi}_z\triangleq\{\psi_{i,k}^{n, (z)}\}$ are newly introduced slack variables. 
Specifically, $\Gamma_{i,k}^{n,(z)}$ represents the SINR of user $k$ associated with BS/ST $i$ over band $n$, while $\theta_{i,k}^{n, (z)}$ and $\psi_{i,k}^{n, (z)}$ denote the upper and lower bound for the inter-operator interference that operator $z$ causes to user $k$ over band $n$, respectively. 
The reason we introduce $\theta_{i,k}^{n,(z)}$ is to guarantee MBCs \eqref{eq:p121_mbc}, while that for $\psi_{i,k}^{n,(z)}$ is to satisfy backhaul constraints \eqref{eq:p12_bkhaul}. 

Note that $\Gamma_{i,k}^{n,(z)}$, $\theta_{i,k}^{n,(z)}$ and $\psi_{i,k}^{n,(z)}$ are all obtained and optimized by operator $z$ locally. 
Intuitively, $\Gamma_{i,k}^{n,(z)}$ is a local version of $\Gamma_{i,k}^n$ determined by operator $z$, while $\theta_{i,k}^{n,(z)}$ and $\psi_{i,k}^{n,(z)}$ are the local versions of $I_{j,k}^n,j\in\mathcal{B}\setminus\mathcal{B}_z$. 
On the other hand, $\mathbf{\dot{\Gamma}}\triangleq\{\dot{\Gamma}_{i,k}^n\}$, $\mathbf{\dot{\theta}_{i,k}^n}\triangleq\{\dot{\theta}_{i,k}^n\}$ and $\mathbf{\dot{\psi}_{i,k}^n}\triangleq\{\dot{\psi}_{i,k}^n\}$ are the corresponding global versions of $\Gamma_{i,k}^n$, $\theta_{i,k}^n$ and $\psi_{i,k}^n$, which can be obtained through aggregating local variables across operators. The constraints \eqref{eq:p121_global_gamma}-\eqref{eq:p121_global_psi} can make sure that operators finally reach a consensus on the value of SINR and inter-operator interference, guaranteeing the equivalence of \eqref{eq:p12} and \eqref{eq:p121}. 
% Then, the constraints \eqref{eq:p121_global_gamma}-\eqref{eq:p121_global_psi} can make sure that operators finally reach a consensus on the value of SINR and inter-operator interference, guaranteeing the equivalence of \eqref{eq:p12} and \eqref{eq:p121}. 

With the introduced slack variables, 
% the original constraints 
% \eqref{eq:p11_mbc}, \eqref{eq:p11_interfer_up_bound}, and \eqref{eq:p12_interfer_low_bound} 
% making the operators coupled are transformed into new constraints \eqref{eq:p121_mbc}, \eqref{eq:p121_beta} and \eqref{eq:p121_rho}. Hence, 
the constraints from \eqref{eq:p121_mbc} to \eqref{eq:p121_local_psi} can be decomposed into independent local convex sets for each operator, and the local set for operator $z$ is:
\begin{equation}\label{eq:p13_cons_set}
  \begin{aligned}
    \mathcal{C}_z = \Bigg\{
      & \left(\mathbf{p,t,\bar{w}}_z \right) \mid \eqref{eq:p11_p1_cons}, \eqref{eq:p11_sinr_up_bound}, \eqref{eq:p12_bkhaul}, \eqref{eq:p121_rate_local}-\eqref{eq:p121_local_psi}
      % & \sum_{i,k}x_{i,k}\alpha_{i,k}^zr_{i,k}^{(z)} \geq U_z^0, \\
      % & \Gamma_{i,k}^{n,(z)} \leq g_1(\mathbf{w}_{i,k}^n[\tau-1], \beta_{i,k}^n[\tau-1]), \\
      % & \beta_{i,k}^n \geq \tilde{I}_{i,k}^n + \theta_{m,k}^{n,(z)} + \sigma_t^2, \\
      % & \rho_{i,k}^n \leq g_2(\mathbf{w}_{i,k}^n[\tau-1]) + \psi_{m,k}^{n,(z)}, \\
      % & \theta_{z,k}^{n,(z)} \geq \sum_{i\in\mathcal{B}_z,k^\prime}\left\vert\mathbf{h}_{i,k}^n\mathbf{w}_{i,k^\prime}^n\right\vert ^2,
      % \\
      % & \psi_{z,k}^{n,(z)} \leq \sum_{i\in\mathcal{B}_z,k^\prime}g_2(\mathbf{w}_{i,k^\prime}^n[\tau-1])
    \Bigg\},
  \end{aligned}
\end{equation}
% where $\mathbf{\bar{w}}_z\triangleq\{\mathbf{w}_z,\boldsymbol{\beta}_z,\boldsymbol{\varphi}_z,\boldsymbol{\rho}_z,\boldsymbol{\digamma}_z,\boldsymbol{\theta}_z,\boldsymbol{\psi}_z\}$ represents the local variables of operator $z$. % depends on $\mathbf{w}_z$.
Then, the subproblem of \eqref{eq:p121} for operator $z$ can be compactly expressed as
\begin{subequations}\label{eq:p122}
  \begin{align}
    \mathbf{P1.2.2:} & \min_{\substack{\mathbf{p,t,\bar{w}}_z,\\ \mathbf{\dot{\Gamma}},\boldsymbol{\dot{\theta}},\boldsymbol{\dot{\psi}}}} \sum_{i\in\mathcal{B}_z}\sum_{k\in\mathcal{U}} -x_{i,k}b_{i,k}r_{i,k}^{(z)} \label{eq:p122_obj} \\
    & \text{s.t. } \eqref{eq:p121_rate_local} \nonumber\\
    & \quad\ \
    \left(\mathbf{p,t,\bar{w}}_z\right) \in \mathcal{C}_z,\\
    & \quad\ \
    \Gamma_{i,k}^{n,(z)} = \dot{\Gamma}_{i,k}^n, \forall i\in\mathcal{B},\forall k\in\mathcal{U},\label{eq:p122_gamma}\\
    & \quad\ \
    \theta_{z,k}^{n,(m)} = \dot{\theta}_{z,k}^n, \forall m \in\mathcal{O},\forall i\in\mathcal{B},\forall k\in\mathcal{U},\label{eq:p122_theta}\\
    & \quad\ \
    \psi_{z,k}^{n,(m)} = \dot{\psi}_{z,k}^n, \!\forall m \in\mathcal{O},\!\forall i\in\mathcal{B},\forall k\in\mathcal{U}. \label{eq:p122_psi}
  \end{align}
\end{subequations}
For notational simplicity, define $\boldsymbol{\Phi}_z\triangleq\{\mathbf{p,t,\bar{w}}_z\}$ and $\boldsymbol{\digamma}_z\triangleq\{\Gamma_{i,k}^{n,(z)}\}_{n\in\mathcal{O},i\in\mathcal{B}_z,k\in\mathcal{U}}$. Similarly, we define $\boldsymbol{\theta}_z$ and $\boldsymbol{\psi}_z$. Then the augmented Lagrangian of \eqref{eq:p122} can be written in the following
\begin{subequations}\label{eq:p122_auglag}
  \begin{align*}
    L(\boldsymbol{\Phi}_z,\mathbf{\dot{\Gamma}},\boldsymbol{\dot{\theta},\dot{\psi}},\boldsymbol{\nu}_z)=
    & \sum_{i\in\mathcal{B}_z}\sum_{k\in\mathcal{U}} -x_{i,k}b_{i,k}r_{i,k}^{(z)} + 
    % \sum_{z\in\mathcal{O}}\Bigg\{
      \mathbb{I}_{\mathcal{C}_z}(\boldsymbol{\Phi}_z)\\
    & + \boldsymbol{\nu}_{z, \Gamma}^T\left(\boldsymbol{\digamma}_z-\mathbf{\dot{\Gamma}}\right) + \frac{c}{2}\left\Vert\boldsymbol{\digamma}_z-\mathbf{\dot{\Gamma}}\right\Vert^2\\
    & + \boldsymbol{\nu}_{z, \theta}^T\left(\boldsymbol{\theta}_z-\boldsymbol{\dot{\theta}}\right) + \frac{c}{2}\left\Vert\boldsymbol{\theta}_z-\boldsymbol{\dot{\theta}}\right\Vert^2\\
    & + \boldsymbol{\nu}_{z, \psi}^T\left(\boldsymbol{\psi}_z-\boldsymbol{\dot{\psi}}\right) + \frac{c}{2}\left\Vert\boldsymbol{\psi}_z-\boldsymbol{\dot{\psi}}\right\Vert^2,
    % \Bigg\},
    % & + \boldsymbol{\nu}_{\Gamma}^T\left(\boldsymbol{\digamma}-\mathbf{\dot{\Gamma}}\right)
    % + \boldsymbol{\nu}_{\theta}^T\left(\boldsymbol{\theta}-\boldsymbol{\dot{\theta}}\right) + \boldsymbol{\nu}_{\psi}^T\left(\boldsymbol{\psi}-\boldsymbol{\dot{\psi}}\right)\\
    % & + \frac{c}{2}\left\Vert\boldsymbol{\digamma}-\mathbf{\dot{\Gamma}}\right\Vert^2 + \frac{c}{2}\left\Vert\boldsymbol{\theta}-\boldsymbol{\dot{\theta}}\right\Vert^2,\\
  \end{align*}
\end{subequations}
where $\mathbb{I}_{\mathcal{C}_z}(\boldsymbol{\Phi}_z)$ is an indicator function defined as
\begin{equation}
  \mathbb{I}_{\mathcal{C}_z}(\boldsymbol{\Phi}_z) = \begin{cases}
    0, & \text{if } \boldsymbol{\Phi}_z\in\mathcal{C}_z,\\
    +\infty, & \text{otherwise},
  \end{cases}\nonumber
\end{equation}
$\boldsymbol{\nu}_z\triangleq\{\boldsymbol{\nu}_{z,\Gamma},\boldsymbol{\nu}_{z,\theta},\boldsymbol{\nu}_{z,\psi}\}$ denotes the dual variables related to constraints \eqref{eq:p122_gamma}-\eqref{eq:p122_psi} of operator $z$, and $c$ is the penalty coefficient introduced for penalizing the violation of equality constraints.

In the following, we elaborate on the variable updates in ADMM. First, the local variables $\{\mathbf{p,\bar{w}}_z\}$ or $\{\mathbf{t,\bar{w}}_z\}$ in the $(l+1)$-th iteration can be updated as follows:
\begin{equation}\label{eq:aug_local_upd}
  \begin{aligned}
    (\mathbf{p,\bar{w}}_z)^{(l+1)}
    &= \arg\min_{\mathbf{p,\bar{w}}_z}L(\boldsymbol{\Phi}^{(l)},\mathbf{\dot{\Gamma}}^{(l)},\boldsymbol{\dot{\theta}}^{(l)},\boldsymbol{\dot{\psi}}^{(l)},\boldsymbol{\nu}_z^{(l)}),\\
    (\mathbf{t,\bar{w}}_z)^{(l+1)}
    &= \arg\min_{\mathbf{p,\bar{w}}_z}L(\boldsymbol{\Phi}^{(l)},\mathbf{\dot{\Gamma}}^{(l)},\boldsymbol{\dot{\theta}}^{(l)},\boldsymbol{\dot{\psi}}^{(l)},\boldsymbol{\nu}_z^{(l)}).
  \end{aligned}
\end{equation}
Next, the global variables can be updated as
\begin{equation}\label{eq:aug_glb_upd}
  \begin{aligned}
    \boldsymbol{\dot{\Gamma}}^{(l+1)} &= \frac{1}{\vert\mathcal{O}\vert}\sum_{z\in\mathcal{O}}\left(\boldsymbol{\digamma}^{(l+1)}_z + \frac{1}{c}\boldsymbol{\nu}_{z,\Gamma}^{(l)}\right), \\
    \boldsymbol{\dot{\theta}}^{(l+1)} &= \frac{1}{\vert\mathcal{O}\vert}\sum_{z\in\mathcal{O}}\left(\boldsymbol{\theta}^{(l+1)}_z + \frac{1}{c}\boldsymbol{\nu}_{z,\theta}^{(l)}\right),\\
    \boldsymbol{\dot{\psi}}^{(l+1)} &= \frac{1}{\vert\mathcal{O}\vert}\sum_{z\in\mathcal{O}}\left(\boldsymbol{\psi}^{(l+1)}_z + \frac{1}{c}\boldsymbol{\nu}_{z,\psi}^{(l)}\right),
    % \boldsymbol{\dot{\Gamma}}^{(l+1)} &= \frac{1}{2}(\boldsymbol{\digamma}^{(l)}_G + \boldsymbol{\Gamma}^{S,(l)}) + \frac{1}{2c}(\boldsymbol{\nu}_{\Gamma}^{G,(l)}+\boldsymbol{\nu}_{\Gamma}^{S,(l)}), \\
    % \boldsymbol{\dot{\theta}}^{(l+1)} &= \frac{1}{2}(\boldsymbol{\theta}^{G,(l)} + \boldsymbol{\theta}^{S,(l)}) + \frac{1}{2c}(\boldsymbol{\nu}_{\theta}^{G,(l)}+\boldsymbol{\nu}_{\theta}^{S,(l)}),\\
    % \boldsymbol{\dot{\psi}}^{(l+1)} &= \frac{1}{2}(\boldsymbol{\psi}^{G,(l)} + \boldsymbol{\psi}^{S,(l)}) + \frac{1}{2c}(\boldsymbol{\nu}_{\psi}^{G,(l)}+\boldsymbol{\nu}_{\psi}^{S,(l)}).
  \end{aligned}
\end{equation}
where $\vert\mathcal{O}\vert$ is the number of operators. 
% , while $\boldsymbol{\nu}_{z,\Gamma},\boldsymbol{\nu}_{z,\theta}$, and $\boldsymbol{\nu}_{z,\psi}$ denote the dual variables related to operator $z$
Finally, dual variables can be updated as follows:
\begin{equation}\label{eq:aug_dual_upd}
  \begin{aligned}
    \boldsymbol{\nu}_{z, \Gamma}^{(l+1)} &= \boldsymbol{\nu}_{z, \Gamma}^{(l)} + c(\boldsymbol{\digamma}_z^{(l+1)}-\mathbf{\dot{\Gamma}}^{(l+1)}),\\
    \boldsymbol{\nu}_{z, \theta}^{(l+1)} &= \boldsymbol{\nu}_{z, \theta}^{(l)} + c(\boldsymbol{\theta}_z^{(l+1)}-\boldsymbol{\dot{\theta}}^{(l+1)}),\\
    \boldsymbol{\nu}_{z, \psi}^{(l+1)} &= \boldsymbol{\nu}_{z, \psi}^{(l)} + c(\boldsymbol{\psi}_z^{(l+1)}-\boldsymbol{\dot{\psi}}^{(l+1)}).
  \end{aligned}
\end{equation}
The algorithm for updating $\{\mathbf{p}[\tau], \mathbf{\bar{w}}[\tau]\}$ or $\{\mathbf{t}[\tau], \mathbf{\bar{w}}[\tau]\}$ is summarized in Algorithm \ref{Alg51}.
\begin{algorithm}[hthp]
  \caption{Distributed Algorithm for Resource Allocation and Beamforming\label{Alg51}}
  {
  \begin{algorithmic}[1]
    \STATE Input $\mathbf{p}[\tau-1], \mathbf{t}[\tau-1], \mathbf{\bar{w}}[\tau-1], \mathbf{x}[\tau-1], \boldsymbol{\nu}_z$.
    \REPEAT
    \STATE Each operator updates local variables $\{\mathbf{p}[\tau], \mathbf{\bar{w}}_z[\tau]\}$ or $\{\mathbf{t}[\tau], \mathbf{\bar{w}}_z[\tau]\}$ according to \eqref{eq:aug_local_upd}.
    \STATE Each operator shares $\boldsymbol{\Gamma}_z, \boldsymbol{\theta}_z$ and $\boldsymbol{\psi}_z$ with each other and updates global variables $\boldsymbol{\dot{\Gamma}}, \boldsymbol{\dot{\theta}}$ and $\boldsymbol{\dot{\psi}}$ according to \eqref{eq:aug_glb_upd}.
    \STATE Each operator updates dual variables $\boldsymbol{\nu}_z$ according to \eqref{eq:aug_dual_upd}.
    \UNTIL The objective converges or the maximum number of iterations is reached.
    \IF {The new WSR is smaller than the previous one}
    \STATE Increase the maximum number of ADMM iteration $I_{ADMM}$
    \STATE Return $\{\mathbf{p}[\tau-1], \mathbf{\bar{w}}[\tau-1]\}$ or $\{\mathbf{t}[\tau-1], \mathbf{\bar{w}}[\tau-1]\}$
    \ENDIF
  \end{algorithmic}}
\end{algorithm}
% \textit{Implementation issues:} 

% It is worth noting that the ADMM-based loop usually requires lots of iterations to converge, significantly slowing down the convergence rate. 
Motivated by the fact observed in \cite{nguyen2016distributed}, we adopt the early termination strategy in the ADMM-loop. However, the new objective value may be worse than the previous one due to limited iterations. In this case, we will increase $I_{ADMM}$ and not accept the new results.
% we compare the new and previous objective values after each ADMM loop to decide whether to accept the new optimization results.
% to guarantee the convergence of this algorithm.
The overall algorithm 
% consisting of user association, resource allocation and beamforming design 
is summarized in Algorithm \ref{Alg5}. 
% The primal residual $r^\dag_p$ and dual residual $r^\dag_d$ are defined as
% \begin{equation}
%   \begin{aligned}
%     r^\dag_p &= \left\Vert\boldsymbol{\digamma}-\boldsymbol{\dot{\Gamma}}\right\Vert^2 + \left\Vert\boldsymbol{\theta}-\boldsymbol{\dot{\theta}}\right\Vert^2 + \left\Vert\boldsymbol{\psi}-\boldsymbol{\dot{\psi}}\right\Vert^2,\\
%     r^\dag_d &= 2c^2(\left\Vert\boldsymbol{\dot{\Gamma}}^{(l+1)}-\boldsymbol{\dot{\Gamma}}^{(l)}\right\Vert^2 + \left\Vert\boldsymbol{\dot{\theta}}^{(l+1)}-\boldsymbol{\dot{\theta}}^{(l)}\right\Vert^2\\
%     & + \left\Vert\boldsymbol{\dot{\psi}}^{(l+1)}-\boldsymbol{\dot{\psi}}^{(l)}\right\Vert^2 )
%   \end{aligned}
% \end{equation}

\begin{algorithm}[hthp]
  \caption{Distributed Algorithm for WSRM\label{Alg5}}
  {
  \begin{algorithmic}[1]
    \STATE Initialize $\mathbf{p}[0], \mathbf{t}[0], \mathbf{\bar{w}}[0], \mathbf{x}[0]$ to feasible values.
    \REPEAT
    \STATE Update $\mathbf{p}[\tau], \mathbf{\bar{w}}[\tau]$ via Algorithm \ref{Alg51}.
    % \STATE Update SCA variables $\mathbf{p}[\tau], \mathbf{\tilde{w}}[\tau]$.
    \STATE Update $\mathbf{t}[\tau], \mathbf{\bar{w}}[\tau]$ via Algorithm \ref{Alg51}.
    % \STATE Decide whether accept $\mathbf{t}[\tau], \mathbf{\tilde{w}}[\tau]$.
    % \STATE Update SCA variables $\mathbf{t}[\tau], \mathbf{\tilde{w}}[\tau]$.
    \STATE Update $\mathbf{x}$ via Algorithm \ref{Alg4}.
    \UNTIL The objective converges or the maximum number of iterations is reached.
  \end{algorithmic}}
\end{algorithm}
\vspace{-2em}

\subsection{Convergence and Complexity Analysis}
% Algorithm \ref{Alg5} consists of an outer SCA-based loop and an inner loop. The convergence of SCA has been proved in Proposition \ref{prop1}. We mainly analyze the convergence of the inner loop, including Algorithm \ref{Alg4} and the proposed consensus ADMM algorithm.
The convergence of Algorithm \ref{Alg5} is provided as follows.
\begin{prop}\label{prop2}
  Algorithm \ref{Alg5} is guaranteed to converge.
\end{prop}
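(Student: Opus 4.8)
The plan is to reduce the convergence of Algorithm~\ref{Alg5} to the same monotone--boundedness argument that underlies Proposition~\ref{prop1}: I would show that the sequence of WSR values generated across the outer iterations is monotonically non-decreasing and bounded above, so that it converges by the monotone convergence theorem. Boundedness from above is the easy half and I would dispatch it first: the per-node power budgets \eqref{eq:p1_bs_power}, \eqref{eq:p1_sat_power} together with the finite bandwidths $B_C$ and $B_{Ka}$ cap every SINR $\gamma_{i,k}^n$, hence every rate $R_{i,k}$, so the weighted sum in \eqref{eq:p1_obj} is bounded over the feasible set. The substance of the proof is therefore to establish the monotonic non-decrease even though the three blocks are now updated by \emph{distributed} subroutines rather than by exact centralized solves.

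First I would treat the two continuous blocks $\{\mathbf p,\bar{\mathbf w}\}$ and $\{\mathbf t,\bar{\mathbf w}\}$, each computed by the consensus ADMM of Algorithm~\ref{Alg51}. The key observation is that, for fixed $\mathbf x$ and fixed SCA linearization point, the reformulation \textbf{P1.2.1} has a separable objective, closed convex local sets $\mathcal C_z$ as in \eqref{eq:p13_cons_set}, and only the \emph{linear} consensus constraints \eqref{eq:p121_global_gamma}--\eqref{eq:p121_global_psi} coupling the operators; this is exactly the convex setting in which consensus ADMM is guaranteed to converge to the global optimum \cite{boyd2011distributed}. I would then invoke the standard majorization property that the surrogates \eqref{eq:sca_gamma}, \eqref{eq:sca_rho}, \eqref{eq:sca_phi}, \eqref{eq:sca_bkhaul} are global bounds that are tight at the current iterate, so an optimal solution of \textbf{P1.2} cannot decrease the true WSR. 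Because ADMM is run for only finitely many inner iterations, I would finally lean on the explicit accept/reject branch in Algorithm~\ref{Alg51}, which commits an update only when it does not reduce the WSR and thereby preserves monotonicity under early termination.

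Next I would turn to the user-association block solved by Algorithm~\ref{Alg4}. For fixed continuous variables and fixed $\boldsymbol\lambda$, the Lagrangian \eqref{eq:lag_fun} separates across users and is maximized in closed form by \eqref{eq:dual_opt_ua}, which returns a binary $\mathbf x$ and hence keeps \eqref{eq:p1_binary} feasible; the outer dual \eqref{eq:dual_ua} is convex in $\boldsymbol\lambda$ with $U_z-U_z^0$ a valid subgradient, so the projected recursion \eqref{eq:dual_var_upd} converges for a diminishing step size. The hard part will be reconciling this update with the monotonicity requirement: since \textbf{P1.5} is discrete and non-convex, solving its dual \eqref{eq:dual_ua} is \emph{not} equivalent to solving \textbf{P1.5} and may leave a nonzero duality gap, so the returned association need not maximize the WSR over $\mathbf x$ and could in principle lower the objective. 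I expect this to be the main obstacle, and I would handle it in the same spirit as the ADMM block---treating the dual-based association as a candidate and retaining it only when it does not decrease the MBC-feasible objective---so that the outer sequence remains monotone. Under this safeguard the three block updates jointly produce a non-decreasing, bounded WSR sequence and convergence follows; I would close by emphasizing that only convergence of the objective value is asserted here, stationarity of the limit being a stronger property outside the scope of the claim.
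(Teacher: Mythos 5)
Your proposal follows the same skeleton as the paper's proof: show the outer WSR sequence is non-decreasing and bounded, handle the two continuous blocks by noting that for fixed $\mathbf{x}$ the per-operator subproblems are convex so consensus ADMM converges to the optimum \cite{boyd2011distributed}, and cover early termination by the accept/reject branch of Algorithm~\ref{Alg51} — all of this matches the paper's Appendix~\ref{app:prop2} essentially line by line (the paper leaves boundedness implicit, inheriting it from Proposition~\ref{prop1}). The divergence is in the user-association block. The paper asserts the chain $F_{\mathbf{x}[\tau-1]}(\mathbf{x}[\tau-1]) \leq F_{\mathbf{x}[\tau-1]}(\mathbf{x}[\tau]) \leq F(\mathbf{x}[\tau])$ and justifies it solely by convergence of the subgradient method on the convex dual \eqref{eq:dual_ua}; you instead observe that, because \textbf{P1.5} is discrete, the dual optimum need not produce a primal maximizer — precisely the duality gap the paper itself concedes in Section~\ref{sec:dis_alg} — so the association returned by \eqref{eq:dual_opt_ua} could in principle lower the objective, or even violate the MBC \eqref{eq:mbc}. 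That diagnosis is sound and pinpoints the weakest step of the paper's own argument: subgradient convergence in $\boldsymbol{\lambda}$ does not by itself imply that the recovered $\mathbf{x}$ improves the true WSR. However, your remedy — an accept/reject safeguard on the association update, mirroring the one in Algorithm~\ref{Alg51} — is not part of Algorithm~\ref{Alg4} or Algorithm~\ref{Alg5} as stated, so strictly speaking you prove convergence of a modified algorithm rather than of Algorithm~\ref{Alg5} itself. Each route buys something: the paper's proof addresses the stated algorithm but rests on an inequality whose justification is loose; yours is airtight but requires amending the algorithm (a change worth making explicit in the algorithm description if this proof is adopted).
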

\begin{proof}
  Please refer to Appendix \ref{app:prop2}
\end{proof}
The complexity of Algorithm \ref{Alg5} is dominated by Algorithm \ref{Alg4} (line $5$ of Algorithm \ref{Alg5}) and Algorithm \ref{Alg51} (line $3$ and $4$ of Algorithm \ref{Alg5}). The complexity of Algorithm \ref{Alg4} is $\mathcal{O}(I_{UA}(N_B+N_B^\prime)(N_U+N_U^\prime))$, in which $I_{UA}$ is the iteration number of Algorithm \ref{Alg4}. The main complexity of Algorithm \ref{Alg51} is solving \eqref{eq:aug_local_upd}, which is similar to Algorithm \ref{Alg1}. As a result, the complexities for line $3$ and $4$ of Algorithm \ref{Alg5} are $\mathcal{O}(I_{ADMM}(N_L+(N_B+N_B^\prime)(N_U+N_U^\prime))^4)$ and $\mathcal{O}(I_{ADMM}(N_B^\prime+(N_B+N_B^\prime)(N_U+N_U^\prime))^4)$, respectively. 
% First, we analyze the convergence of Algorithm \ref{Alg4}. Because the dual problem \eqref{eq:dual_ua} is a convex problem, the subgradient method is guaranteed to converge to the optimal solution \cite{boyd2003subgradient}. As a result, the convergence of Algorithm \ref{Alg4} is guaranteed. Next, we analyze the convergence of proposed consensus ADMM algorithm. We use the consensus ADMM algorithm to solve convex problems in step $3$ to $7$ (or step $8$ to $12$) of Algorithm \ref{Alg5}, and the objective will converge with enough iterations \cite{boyd2011distributed}. Besides, the proposed consensus ADMM algorithm 
% Since we adopt the early termination strategy, the convergence results for classic consensus ADMM algorithms \cite{boyd2011distributed} cannot be directly used here. However, the proposed consensus ADMM algorithm can generate a non-decreasing objective because we adopt the trick.
% Since we use the consensus ADMM algorithm to solve convex problems in step $3$ to $7$ (or step $8$ to $12$) of Algorithm \ref{Alg5}, the consensus ADMM algorithm will finally converge with enough iterations \cite{boyd2011distributed}. 

%Besides, the problem \eqref{eq:p13} can be decomposed into independent subproblems for each operator.

\section{Simulation Results}\label{sec:sim_result}

In this section, simulation results are provided to demonstrate the effectiveness of the proposed algorithms and the significance of the MBCs. The considered SAGIN is shown in Fig. \ref{fig:env_model}. Specifically, GNO has $N_B=2$ BSs and $N_U=3$ users, while SNO has $N_B^\prime=4$ STs and $N_U^\prime=7$ users. Each BS/ST is equipped with $N_t=4$ antennas, and the maximum transmit power for each BS is $52$ dBm, and for each ST, it is $49$ dBm. Besides, we assume each operator has a $B_C = 100$ MHz dedicated C band. The parameters of the satellite-ground link are summarized in Table \ref{tab:sys_para}.
The channel for the ground-ground link is set as follows: the path-loss is modeled as $32.4 + 20\log(f_C) + 30\log(d)$, where $f_C=3$ is the carrier frequency in GHz and $d$ is the distance in km. The Rayleigh fading is adopted to characterize the small-scale fading, and the noise power spectral density is $-174$ dBm/Hz. Other parameters are set as follows: $\varrho=1\times10^{-4}$, $q=50$, $\epsilon=20$, $\eta=0.1$, $\epsilon=10^{-3}$, $c=1.5$, and $b_{i,k}=1, \forall i,k$.

\begin{figure}[htbp]
  \begin{center}
  \epsfxsize=0.3\textwidth \leavevmode
  \epsffile{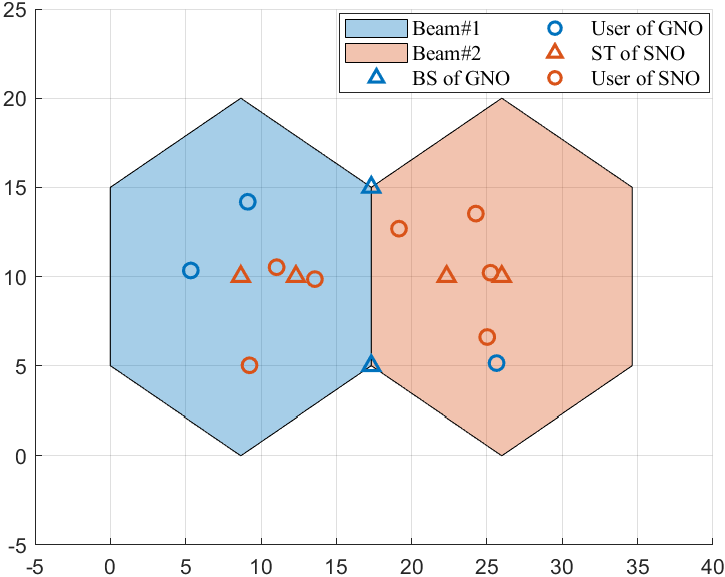}
  \caption{The considered SAGIN.}\label{fig:env_model}
  \end{center}
  \vspace{-2em}
\end{figure}

\begin{table}[htbp]
  \centering
  \caption{System Parameters for Satellite-Ground Link}\label{tab:sys_para}
  \renewcommand\arraystretch{1.3}
  \resizebox{.4\textwidth }{!}{
  \begin{tabular}{c| c}
    \hline
    PARAMETER & VALUE \\ \hline
    \hline 
    % after \\: \hline or \cline{col1-col2} \cline{col3-col4} ...
    LEO height & $600$ km \\ \hline
    The number of beams & $N_L=2$  \\ \hline
    Beam radius & $10$ km \\ \hline
    Carrier frequency & $f_{Ka} = 20$ GHz \\ \hline
    System bandwidth & $B_{Ka}=400$ MHz \\\hline
    LEO maximum transmit power & $P_{Sat}=50$ W \\ \hline
    LEO maximum transmit gain & $G_T^0$ = $40$ dBi \\ \hline
    LEO antenna aperture & $0.6$ m  \\ \hline
    ST maximum receive gain & $G_R=10$ dBi  \\ \hline
    ST antenna temperature & $150$ K  \\ \hline
    Environment temperature & $290$ K  \\ \hline
    Boltzmann constant & $1.38 \times 10 ^ {-23}$ J$/$K  \\ \hline
    Noise figure & $1.2$ dB  \\ \hline
    Path-loss & $92.44 + 20\log(f_{Ka}) + 20\log(d)$ \\ \hline
    SR fading $(b,m,\Omega)$ & $(0.126,10.1,0.835)$ \\ \hline
  \end{tabular}
  }
\end{table}

To show the effectiveness of the proposed algorithms, we consider the following baseline algorithms for comparison.
\begin{itemize}
  \item \textbf{Closest Association and Equal Resource Allocation (CA-ERA):} Each user is connected to the closest BS/ST, with equal power allocation and maximum ratio transmit (MRT) beamfroming employed by BSs/STs. The LEO satellite equally allocates power to beams and time to STs within a beam.
  \item \textbf{Closest Association and Optimized Power and Beamforming (CA-OPW):} The power allocation of the LEO satellite and the beamforming of BSs/STs are optimized with the proposed algorithm.
  \item \textbf{Closest Association and Optimized Time and Beamforming (CA-OTW):} The time allocation of each beam and the beamforming of BSs/STs are optimized with the proposed algorithm.
\end{itemize}

\begin{figure}[htbp]
  \vspace{-1em}
  \begin{center}
  \epsfxsize=0.35\textwidth \leavevmode
  \epsffile{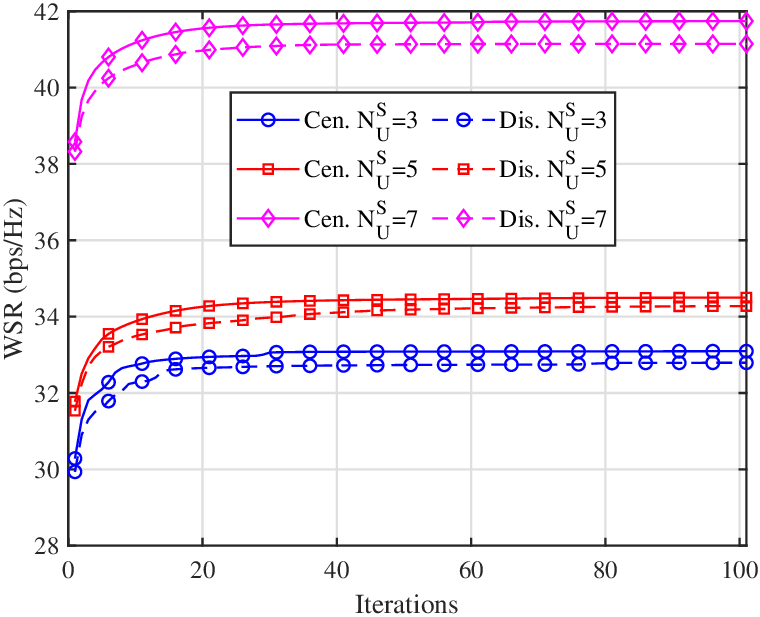}
  \caption{The convergence of the centralized Algorithm \ref{Alg1} and distributed Algorithm \ref{Alg5}: $\delta_G=\delta_S=0.6$.}\label{fig:alg1_converge}
  \end{center}
  \vspace{-1em}
\end{figure}

\begin{figure}[htbp]
  \vspace{-1em}
  \begin{center}
  \epsfxsize=0.35\textwidth \leavevmode
  \epsffile{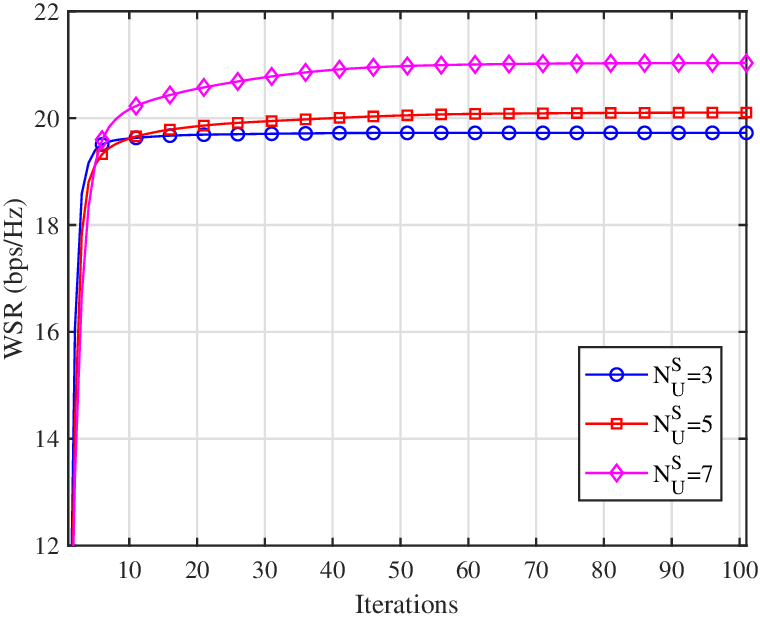}
  \caption{The convergence of Algorithm \ref{Alg3}: $\delta_G=\delta_S=0.6$.}\label{fig:alg3_converge}
  \end{center}
  \vspace{-1.5em}
\end{figure}

\begin{figure}[htbp]
  % \vspace{-1em}
  \begin{center}
  \epsfxsize=0.35\textwidth \leavevmode
  \epsffile{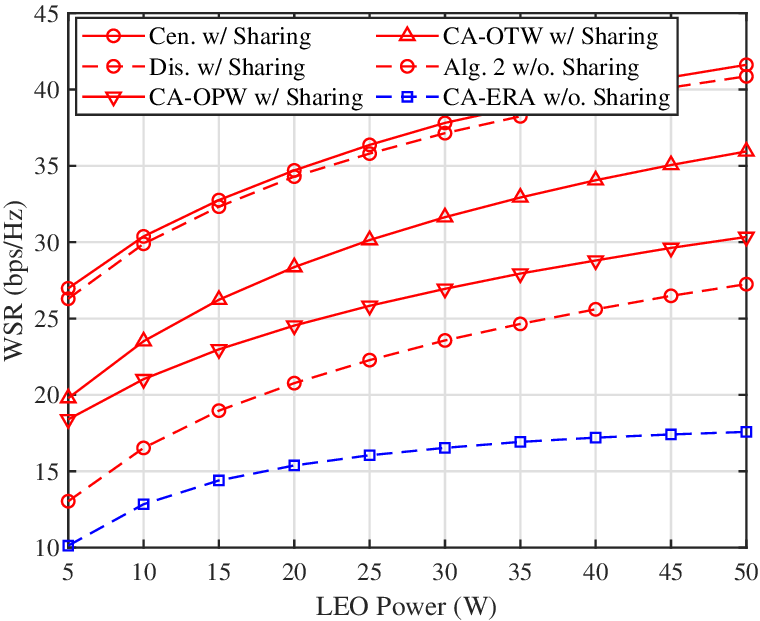}
  \caption{WSR versus maximum transmit power of the LEO satellite $P_{Sat}$: $\delta_G=\delta_S=0.6$, $P_i=52$ dBm,$i\in\mathcal{B}_G$, $P_j=49$ dBm,$j\in\mathcal{B}_S$.}\label{fig:wsr_vs_sat}
  \end{center}
  \vspace{-1em}
\end{figure}

\begin{figure}[htbp]
  % \vspace{-1em}
  \begin{center}
  \epsfxsize=0.35\textwidth \leavevmode
  \epsffile{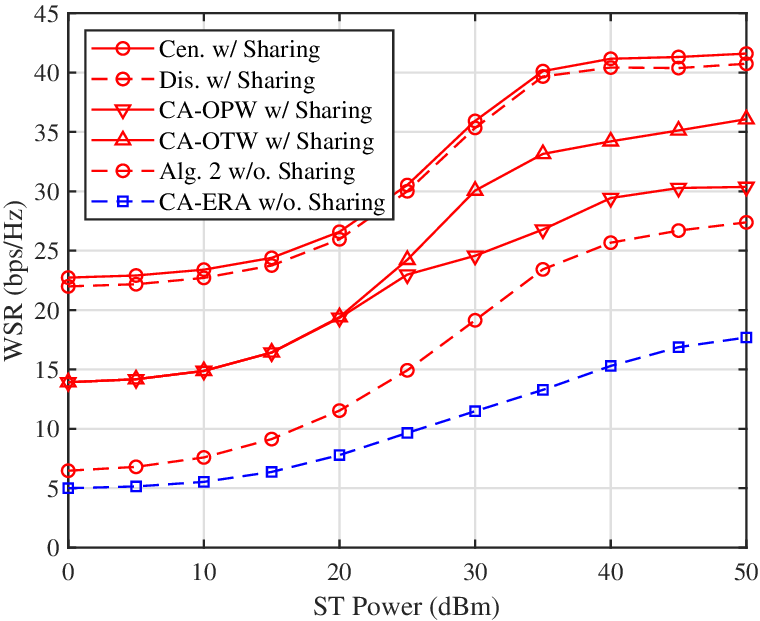}
  \caption{WSR versus maximum transmit power of STs $P_i,i\in\mathcal{B}_S$: $\delta_G=\delta_S=0.6$, $P_{Sat}=50$ W.}\label{fig:wsr_vs_st}
  \end{center}
  \vspace{-1em}
\end{figure}

Firstly, we validate the convergence of proposed Algorithms.
% \ref{Alg1}, Algorithm \ref{Alg3} and Algorithm \ref{Alg5}.
In Fig. \ref{fig:alg1_converge}, we show the convergence of the proposed centralized Algorithm \ref{Alg1} (`Cen.') and distributed Algorithm \ref{Alg5} (`Dis.') when $\delta_G=\delta_S=0.6$. It can be seen that both algorithms converge finally, and the distributed algorithm achieves a very close performance to the centralized algorithm.
Then, the convergence of the Algorithm \ref{Alg3} is demonstrated in Fig. \ref{fig:alg3_converge}. 
% It can be observed that each curve finally converges to a specific value within a limited number of iterations.

Secondly, we compare WSRs achieved by different approaches versus the transmit power of the LEO satellite in Fig. \ref{fig:wsr_vs_sat}. % $P_{Sat}$
It is clear that WSRs increase with $P_{Sat}$ increasing, indicating that the impact of backhaul capacity on the WSR.
Then, we compare WSRs achieved by Algorithm \ref{Alg3} and CA-ERA for the case without inter-operator sharing (`w/o. Sharing'). The proposed algorithm outperforms CR-ERA, demonstrating the effectiveness of Algorithm \ref{Alg3}.
Besides, we illustrate the WSRs for different approaches in the case with inter-operator sharing (`w/ Sharing'). The distributed algorithm approaches the centralized algorithm and performs much better than other benchmarks. This points out that the necessity of jointly optimizing user association, resource allocation  and beamforming design. 
% On the one hand, we compare WSRs achieved by proposed algorithms and those by benchmarks in both cases. It can be seen that 
% We compare WSRs achieved by Algorithm \ref{Alg3} and CA-ERA for the former case. 
% Then, we compare the WSRs obtained by Algorithm \ref{Alg1}, CA-OPW and CA-OTW. The results illustrate that Algorithm \ref{Alg1} performs much better than the other approaches and realizes the highest WSR, which points out that jointly optimizing resource allocation of the LEO satellite and the beamforming design of BSs and STs is necessary. 
Moreover, the WSR achieved in the case with inter-operator sharing is much higher than that in the case without inter-operator sharing, indicating significant gains introduced by realizing SAGIN with inter-operator sharing. 
% In addition, the WSR achieved by the distributed algorithm is close to that achieved by the centralized algorithm.
% Since the two benchmarks cannot guarantee the MBCs, we study the WSRs obtained in the case without inter-operator sharing (`w/o. Sharing'). It can be seen that the proposed algorithm can achieve the highest WSR by jointly optimizing user association and beamforming. Moreover, we also demonstrate the WSRs obtained by the proposed algorithm in the case with inter-operator sharing (`w/ Sharing'). It is evident that inter-operator sharing can lead to significant improvements in WSR.

Thirdly, we investigate the WSRs achieved by different approaches versus the transmit power of STs in Fig. \ref{fig:wsr_vs_st}.% $P_{i}, i\in\mathcal{B}_S$
It can be seen that the WSR in the case with inter-operator sharing does not keep increasing as $P_i$ increases. Particularly, when $P_i$ is greater than $40$ dBm, the WSR almost remains the same. This indicates that the backhaul capacity provided by the LEO satellite limits the WSR, which is consistent with the result obtained from Fig. \ref{fig:wsr_vs_sat}. Additionally, by comparing WSRs achieved by different approaches in two cases, we can find that the proposed algorithms always realize the highest WSRs and the distributed algorithm performs close to that of the centralized algorithm.

\begin{figure}[htbp]
  \vspace{-1em}
  \begin{center}
  \epsfxsize=0.35\textwidth \leavevmode
  \epsffile{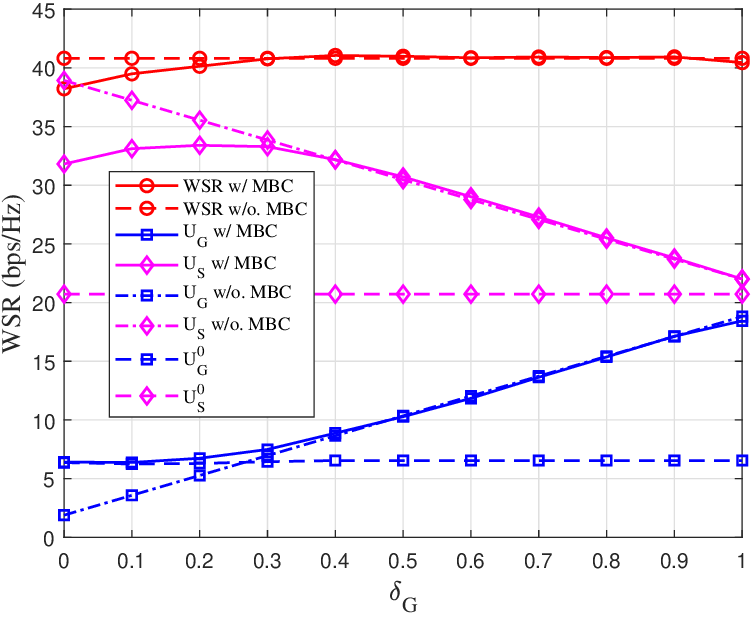}
  \caption{WSR and operators' revenue versus sharing coefficient of GNO $\delta_G$: $\delta_S=1$.}\label{fig:mbc_effect1}
  \end{center}
  \vspace{-1em}
\end{figure}

\begin{figure}[tbp]
  \vspace{-0.5em}
  \begin{center}
  \epsfxsize=0.35\textwidth \leavevmode
  \epsffile{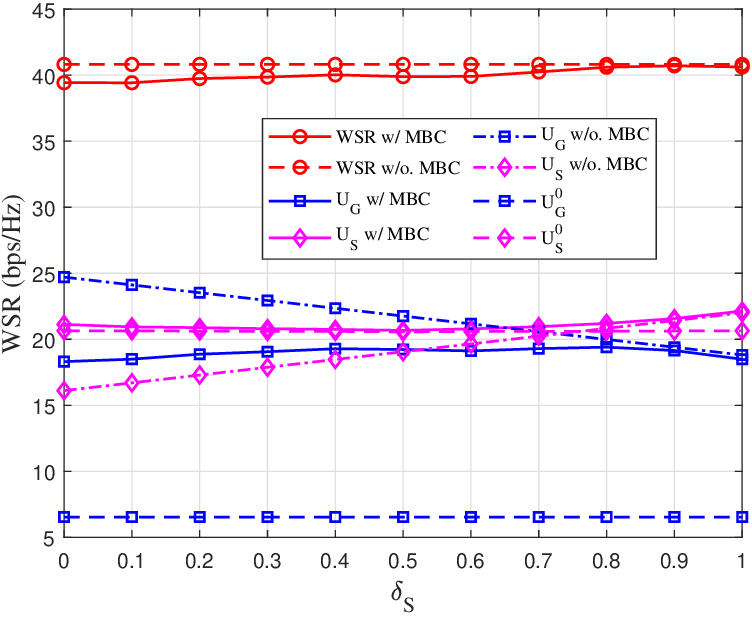}
  \caption{WSR and operators' revenue versus sharing coefficient of SNO $\delta_S$: $\delta_G=1$.}\label{fig:mbc_effect2}
  \end{center}
  \vspace{-1em}
\end{figure}

Next, we present the WSR and operators' revenue versus sharing coefficients to investigate the effect of MBCs in Fig. \ref{fig:mbc_effect1} and Fig. \ref{fig:mbc_effect2}. Specifically, we consider two cases: i) the symbiotic case with MBCs (`w/ MBC'), and ii) the non-symbiotic case without MBCs (`w/o. MBC'). 
In Fig. \ref{fig:mbc_effect1}, we demonstrate WSRs and revenue versus $\delta_G$ with $\delta_S=1$. 
By comparing revenue in the two cases, we can find that each operator's revenue in the symbiotic case does not experience revenue loss after sharing spectrum and services. On the contrary, the GNO suffers revenue loss in the non-symbiotic case when $\delta_G$ is less than $0.2$. This indicates that MBCs can enable operators to achieve mutual benefits.
% This means each operator benefits from inter-operator sharing in the symbiotic case and indicates the importance of the MBC in realizing mutualism. 
Then, we compare the WSRs in the two cases. The WSRs in the symbiotic case are less than those in the non-symbiotic case when $\delta_G<0.2$.
%  and almost coincide with those when $\delta_G\geq0.2$.
The reason is that when $\delta_G<0.2$, the spectrum and service configuration maximizing the WSR makes GNO suffer revenue loss. Therefore, the GNO does not adopt this configuration in the symbiotic case.
% the GNO in the symbiotic case does not adopt the spectrum and service configuration that can maximize the WSR due to the MBC.
It is worth noting that the WSRs achieved in the symbiotic case are close to those in the non-symbiotic case when $\delta_G>0.2$, which demonstrates the proposed algorithms can efficiently maximize WSR and achieve mutual benefits.
% will suffer revenue if only maximizing the overall WSR of the whole considered SAGIN, which can be seen in Fig. \ref{fig:mbc_effect1}. In the symbiotic case, 
% which demonstrates the proposed algorithm can efficiently improve WSR.
Besides, the WSR and revenue of GNO both grow gradually as $\delta_G$ increases, while the revenue of the SNO increases first and then decreases when $\delta_G>0.3$. This trend can be explained as follows. 
% As shown in Fig. \ref{fig:env_model}, some users of SNO are closer to BSs of GNO, while some GNO users are more proximate to STs. Consequently, higher WSR can be realized if service sharing is enabled between SNO and GNO. However,
% the GNO prefers to serve its own users to avoid revenue loss when $\delta_G<0.2$, leading to a low WSR.
With $\delta_G$ increasing, the GNO can obtain more revenue from serving users of SNO. As a result, the GNO demonstrates a growing preference for inter-operator sharing, increasing the WSR and revenue of operators. However when $\delta_G>0.3$,
% it can be found that the revenue of SNO begins to decrease. The reason is that 
the GNO takes away the most revenue from serving SNO's users, resulting in revenue loss of SNO.

Lastly, we study the WSR and revenue versus the sharing coefficient of SNO $\delta_S$ in Fig. \ref{fig:mbc_effect2}. It can be seen that the MBC for SNO cannot be satisfied until $\delta_S\geq0.8$ in the non-symbiotic case. Besides, the WSR and revenue in the symbiotic case almost remain the same when $\delta_S<0.8$. The reason is that the SNO cannot benefit enough from inter-operator sharing and lacks the motivation to share spectrum and services.
On the other hand, by comparing Fig. \ref{fig:mbc_effect1} and Fig. \ref{fig:mbc_effect2}, it can be observed that the GNO will obtain benefit when $(\delta_G\geq0.3, \delta_S=1)$ while the SNO gains when $(\delta_S\geq0.8, \delta_G=1)$. This result indicates a principle to achieve mutual benefits when inter-operator sharing is enabled: the operator with more resources, like the SNO in this paper, should get more compensation from others.
% to guarantee its revenue and realize the mutual benefits.
On the contrary, the GNO, who contributes less to the SAGIN, should not require much compensation from others.

\section{Conclusion}\label{sec:conclu}

In this paper, we have studied a SAGIN consisting of a GNO and SNO, and inter-operator sharing has been adopted to construct such a large-scale network. 
% which can improve resource utilization efficiency and reduce network construction costs.
To fully leverage the advantage of inter-operator sharing, both system performance and individual benefits should be considered.
% Considering that achieving mutual benefits between network operators can facilitate inter-operator sharing, 
Therefore, we have investigated the spectrum and service sharing in the SAGIN from a symbiotic communication perspective, which 
% is an emerging communication paradigm that
can instruct different operators to achieve mutual benefits. Specifically, we have studied a WSR maximization problem about jointly optimizing the user association, resource allocation, and beamforming design. Besides, we have introduced a sharing coefficient to characterize the revenue of each operator, based on which the MBC has been formulated to guarantee that revenue for all operators does not decline after inter-operator sharing. Then, we have proposed a centralized algorithm based on SCA to solve the WSR maximization problem. Since implementing the centralized is difficult in real networks, we have also developed a distributed algorithm based on Lagrangian dual decomposition and consensus ADMM. Finally, extensive simulation results have been provided to show the effectiveness of the proposed algorithms and that the distributed algorithm can approach the centralized algorithm.
Moreover, the results have indicated that the MBCs can enable the operators to realize symbiotic communication in the SAGIN.

\appendices
\section{PROOF OF PROPOSITION \ref{prop1}}\label{app:proof1}
Let $F(\mathbf{p}[\tau-1],\mathbf{t}[\tau-1], \mathbf{x}[\tau-1], \tilde{\mathbf{w}}[\tau-1])$ denote the original objective, and $F_{\mathbf{p}[\tau-1]}(\mathbf{p}[\tau-1],\mathbf{t}[\tau-1], \mathbf{x}[\tau-1], \tilde{\mathbf{w}}[\tau-1])$ denote the approximated objective with the SCA method at point $\mathbf{p}[\tau-1]$. For notational simplicity, we use $F(\mathbf{p}[\tau-1],\tilde{\mathbf{w}}[\tau-1])$ and $F_{\mathbf{p}[\tau-1]}(\mathbf{p}[\tau-1],\tilde{\mathbf{w}}[\tau-1])$ in the following. In step $4$ of Algorithm \ref{Alg1}, we have the following inequality
  \begin{equation}\label{eq:ineq_alg1}
    \begin{aligned}
      F(\mathbf{p}[\tau-1],\tilde{\mathbf{w}}[\tau-1]) &\overset{(a)}{=} F_{\mathbf{p}[\tau-1]}(\mathbf{p}[\tau-1],\tilde{\mathbf{w}}[\tau-1]) \\
      &\overset{(b)}{\leq} F_{\mathbf{p}[\tau-1]}(\mathbf{p}[\tau],\tilde{\mathbf{w}}[\tau]) \\
      &\overset{(c)}{\leq} F(\mathbf{p}[\tau],\tilde{\mathbf{w}}[\tau])
    \end{aligned}
  \end{equation}
where $(a)$ comes from the fact that the first-order Taylor expressions used in \eqref{eq:sca_gamma}, \eqref{eq:sca_rho}, and \eqref{eq:sca_bkhaul} are tight at point $\mathbf{p}[\tau-1]$; $(b)$ holds since $(\mathbf{p}[\tau],\tilde{\mathbf{w}}[\tau])$ are the optimal solutions to \eqref{eq:p12} with given $(\mathbf{t}[\tau-1],\mathbf{x}[\tau-1])$; and $(c)$ holds since $F_{\mathbf{p}[\tau-1]}(\mathbf{p}[\tau],\tilde{\mathbf{w}}[\tau])$ is a lower bound of $F(\mathbf{p}[\tau],\tilde{\mathbf{w}}[\tau])$.

Note that $F_{\mathbf{p}[\tau]}(\mathbf{p}[\tau],\tilde{\mathbf{w}}[\tau])$ is the input of step $5$, and the inequality \eqref{eq:ineq_alg1} can be applied to analyzing step $5$ and $6$. Finally, we have $F(\mathbf{p}[\tau-1],\mathbf{t}[\tau-1], \mathbf{x}[\tau-1], \tilde{\mathbf{w}}[\tau-1]) \leq F(\mathbf{p}[\tau],\mathbf{t}[\tau], \mathbf{x}[\tau], \tilde{\mathbf{w}}[\tau])$, which implies that Algorithm \ref{Alg1} can generate a non-decreasing objective after each iteration. Moreover, the objective is constrained due to the power budget of BSs/STs and the satellite. As a result, Algorithm \ref{Alg1} is guaranteed to converge.

\section{PROOF OF PROPOSITION \ref{prop2}}\label{app:prop2}
In step $3$ of Algorithm \ref{Alg5}, we have the following inequality
  \begin{equation}\label{eq:ineq_alg5_admm}
    \begin{aligned}
      F_{\mathbf{p}[\tau-1]}(\mathbf{p}[\tau-1],\tilde{\mathbf{w}}[\tau-1]) & \overset{(a)}{\leq} F_{\mathbf{p}[\tau-1]}(\mathbf{p}[\tau],\tilde{\mathbf{w}}[\tau])
    \end{aligned}
  \end{equation}
where $(a)$ comes from the fact that the subproblem \eqref{eq:p122} is convex with fixed $(\mathbf{t,x})$ and consensus ADMM can converge to the optimal solution with enough iterations \cite{boyd2011distributed}. Besides, it is clear that $(a)$ also holds when early termination strategy is adopted due to the Algorithm \ref{Alg51} (step $7$). The inequality \eqref{eq:ineq_alg5_admm} is also applicable for step $4$ of Algorithm \ref{Alg5}. As a result, we have $F(\mathbf{p}[\tau-1],\mathbf{t}[\tau-1], \mathbf{x}[\tau-1], \tilde{\mathbf{w}}[\tau-1]) \leq F(\mathbf{p}[\tau],\mathbf{t}[\tau], \mathbf{x}[\tau-1], \tilde{\mathbf{w}}[\tau])$.

In step $6$ of Algorithm \ref{Alg5}, we have the following inequality
\begin{equation}\label{eq:ineq_alg5_alg4}
  \begin{aligned}
    F_{\mathbf{x}[\tau-1]}(\mathbf{x}[\tau-1]) & \overset{(b)}{\leq} F_{\mathbf{x}[\tau-1]}(\mathbf{x}[\tau])\leq F(\mathbf{x}[\tau])
  \end{aligned}
\end{equation}
where $(b)$ holds since the subgradient method is guaranteed to converge to the optimal solution when the dual problem \eqref{eq:dual_ua} is convex \cite{boyd2003subgradient}. As a result, we have $F(\mathbf{p}[\tau-1],\mathbf{t}[\tau-1], \mathbf{x}[\tau-1], \tilde{\mathbf{w}}[\tau-1]) \leq F(\mathbf{p}[\tau],\mathbf{t}[\tau], \mathbf{x}[\tau], \tilde{\mathbf{w}}[\tau])$, which implies that Algorithm \ref{Alg5} generates a non-decreasing objective after each iteration. Therefore, Algorithm \ref{Alg5} is guaranteed to converge. 

% \section*{Acknowledgement}
% This work was supported in part by the National Natural Science Foundation of China under Grant U1801261, the Science and Technology Development Fund, Macau SAR, under Grant 0009/2020/A1, the National Key R\&D Program of China under Grant 2018YFB1801105, the Key Areas of Research and
% Development Program of Guangdong Province, China, under Grant 2018B010114001, the Fundamental Research Funds for the Central Universities under Grant ZYGX2019Z022, and the Programme of Introducing Talents of Discipline to Universities under Grant B20064.

\bibliographystyle{IEEEtran}
\bibliography{IEEEabrv, ref_sagin_share}

\end{document}